%% file: paper.tex
\documentclass[runningheads]{llncs}

\usepackage{stmaryrd}
\usepackage{yhmath}
\usepackage{mathpartir}
\usepackage{textcomp}
\usepackage{amsmath, amssymb}
\usepackage{upgreek}
\usepackage{mathrsfs}

\usepackage{comment}

\includecomment{journal}
\excludecomment{conference} 

\usepackage{url}

\include{mymacros}

\usepackage{mathpartir}
\usepackage{orcidlink}
\usepackage{graphicx}
\usepackage{subcaption}

\usepackage{tikz}
\usetikzlibrary{arrows.meta, positioning, calc, shapes.geometric}

\def \config#1{\langle #1 \rangle}
\newcommand{\block}{\mathit{block}} 
\newcommand{\s}{\pi} 
\newcommand{\cons}{\!:\!}
\newcommand{\flow}{\mathit{flow}}

\newcommand{\entry}{\mathit{entry}}
\newcommand{\exit}{\mathit{exit}}

\newcommand{\context}{\mathsf{ctx}}
\newcommand{\cont}{\mathsf{cont}}
\newcommand{\nul}{[\:]}
\newcommand{\emptystate}{\epsilon}
\newcommand{\emptystore}{\epsilon}

\newcommand{\lh}{\leftharpoondown}
\newcommand{\rh}{\rightharpoonup}
\newcommand{\id}{\epsilon} 

\def\defemb#1#2{\expandafter\def\csname #1\endcsname
                              {\relax\ifmmode #2\else\hbox{$#2$}\fi}}
\defemb{cA}{{\cal A}}
\defemb{cB}{{\cal B}}
\defemb{cC}{{\cal C}}
\defemb{cD}{{\cal D}}
\defemb{cE}{{\cal E}}
\defemb{cF}{{\cal F}}
\defemb{cG}{{\cal G}}
\defemb{cH}{{\cal H}}
\defemb{cI}{{\cal I}}
\defemb{cJ}{{\cal J}}
\defemb{cL}{{\cal L}}
\defemb{cM}{{\cal M}}
\defemb{cN}{{\cal N}}
\defemb{cO}{{\cal O}}
\defemb{cP}{{\cal P}}
\defemb{cQ}{{\cal Q}}
\defemb{cR}{{\cal R}}
\defemb{cS}{{\cal S}}
\defemb{cT}{{\cal T}}
\defemb{cU}{{\cal U}}
\defemb{cV}{{\cal V}}
\defemb{cW}{{\cal W}}
\defemb{cX}{{\cal X}}
\defemb{cZ}{{\cal Z}}

\newcommand{\kw}[1]{\textcolor{blue}{\textbf{#1}}}

\title{A Reversible Semantics for Janus%
}
\titlerunning{A Reversible Semantics for Janus}

\author{Ivan Lanese\inst{1}\orcidlink{0000-0003-2527-9995} 
Germ\'an Vidal\inst{2}\orcidlink{0000-0002-1857-6951}}

\authorrunning{I.~Lanese and G.~Vidal}

\institute{
  Olas Team, University of Bologna \& Inria - Université Côte d’Azur, Bologna, Italy
  \and
  VRAIN, Universitat Polit\`ecnica de Val\`encia, Valencia, Spain
}

\begin{document}

\maketitle

\begin{abstract}
Janus is a paradigmatic example of a reversible programming language. Indeed, Janus programs can be executed backwards as well as forwards. However, its current small-step semantics (useful, e.g., for debugging or as a basis for extensions with concurrency primitives) is not reversible, since it loses information while computing forwards. E.g., it does not satisfy the Loop Lemma, stating that any reduction has an inverse, a main property of reversibility in process calculi, where a small-step semantics is commonly used. We present here a novel small-step semantics which is actually reversible, while remaining equivalent to the previous one. It involves the non-trivial challenge of defining a semantics based on a ``program counter'' for a high-level programming language.
\end{abstract}
	
\section{Introduction}

Reversible computation has emerged as a promising 
framework for addressing a number of limitations of current
computing paradigms \cite{ACGKKKLMMNPPPUV20}.
On the one hand, according to Landauer's principle
\cite{landauerIrreversibilityHeatGeneration1961},
the erasure of one bit of information in an irreversible 
computation necessarily dissipates a minimum amount of heat.
In contrast, Bennett \cite{revTM} showed that, theoretically,
a reversible computation---where no information is 
discarded---can be performed with less heat dissipation 
than an irreversible one. Thus, the development of 
reversible circuits could represent a revolution in 
reducing the energy consumption of large computing centers.

At the programming level, the Janus language 
\cite{lutz1986janus} represents one of the first 
attempts to formalize a \emph{reversible} programming language.
Janus performs (deterministic) computations in both 
forward and backward directions.
The development of reversible programming languages like 
Janus may provide an excellent opportunity to explore 
the relationship between logical reversibility, 
energy conservation, and computational expressiveness.
A formal semantics
for the language was first presented in 
\cite{YOKOYAMA201071,LoopSem,JanusSem}.
Unfortunately, the semantics of \cite{YOKOYAMA201071,LoopSem,JanusSem}
follows the so-called \emph{big-step} style, 
which is less appropriate for some applications, e.g.,
modeling stepwise execution, program debugging,
designing an abstract machine for the language,
modeling concurrency and interleaving (for
a future
concurrent extension of Janus), analyzing
non-termination, etc.

Recently, \cite{LLS24} has introduced a 
small-step semantics for Janus that is more
appropriate for the applications mentioned above.
This semantics employs \emph{contexts} to identify the 
next statement to be executed. In particular, the 
configurations include the code to be executed, 
which is transformed as execution progresses. 
These features, however, make the semantics 
irreversible. 
E.g., a Janus conditional statement of the form 
``$\ms{if}~ e_1~\ms{then}~\ms{skip}~\ms{else}~ s_2~\ms{fi}~e_2$'' 
is reduced to $\ms{skip}$ when the guard $e_2$ is true (in~\cite{LLS24} the $\ms{then}$ branch is first reduced to $\ms{skip}$ when the condition $e_1$ is true before evaluating the guard).
This step, however, is irreversible: we cannot obtain 
``$\ms{if}~ e_1~\ms{then}~\ms{skip}~\ms{else}~ s_2~\ms{fi}~e_2$''
from $\ms{skip}$.
Janus is thus reversible at the level of programs (or statements), in the sense that we can execute any program backwards, recovering input data from results, but not at the level of semantics, since single reduction steps cannot be inverted. This is in strike contrast with the semantics of calculi and languages for concurrency, where the Loop Lemma ensures that single steps have a reverse, e.g., in RCCS~\cite[Lemma 6]{rccs}, CCSK~\cite[Proposition 5.1]{revUlidowski}, or reversible Erlang~\cite[Lemma 11]{LaneseNPV18}.
This prevents applications 
such as reversible debugging, the definition of 
backtracking strategies, etc. Therefore, 
in this work, we tackle the definition of 
a \emph{reversible} small-step semantics for Janus.

First, we introduce a simple small-step semantics 
for Janus that uses a stack to model subcomputations,
and which is equivalent to previous semantics for
the language.
In order to make this semantics reversible, one
could add a ``history'' to the configurations, as it has 
been done to define a reversible semantics for the concurrent calculi and languages mentioned above.
%
This solution, however, is not very satisfactory, not 
only because of the cost in memory usage to store the history,
but also because it is not strictly necessary when the
considered language is reversible (the case of Janus).

Thus, in this paper, we explore an alternative idea.
Specifically, our proposal is based on replacing in the
configurations the component with the code to be executed 
with a ``program counter'' that points to the next 
statement to be executed. 
At first glance, one 
might think that a step could be ``undone'' simply 
by moving the program counter back to the previous position.
The actual definition of the
reversible semantics, however, is far from trivial.
In particular, given that Janus is a high-level language, 
in some cases it is not easy to determine how a program counter should 
be incremented. On the other hand, backward rules pose 
certain problems for determining the correct flow when 
there are several ways to go backwards from a given 
point in the execution, e.g., when undoing procedure calls.
%
Our small-step
semantics is made of a \emph{forward} transition relation, which 
models normal execution (i.e., which is a conservative extension
of the previous small-step semantics), and a \emph{backward} 
transition relation, which 
can be used to ``undo'' the steps of a (forward)
computation.

\begin{conference}
More details and missing proofs can be found in the extended version \cite{LV26arxiv}.
\end{conference}

\section{The Reversible Language Janus} \label{sec:janus}

In this section, we introduce the syntax and semantics of 
Janus~\cite{YOKOYAMA201071,LoopSem,JanusSem}, a reversible 
(imperative) language in which statements can be inverted 
and executed in both forward and backward directions.

\begin{figure}[t]
\begin{center}
$
\begin{array}{rcl@{~~~~~~}rcl@{ }}
\mi{p} & ::= & s ~(\ms{procedure} ~id~s)^{+} & \\
\mi{s} & ::= & x \oplus = e \, \mid \, x[e]\oplus=e \,
\mid \ms{if}~e ~\ms{then} ~s~\ms{else}~s~\ms{fi}~e \\
& \mid & \ms{from}~e~\ms{do}~s~\ms{loop}~s~\ms{until}~e \,
\mid \ms{call}~id \, \mid \, \ms{uncall}~id\mid \, \ms{skip} \, \mid \, s~s\\
\mi{e} & ::= & c \mid x  \mid x[e] \mid e \odot e \qquad \qquad \qquad \qquad \oplus ::=  + \mid - \mid \hat{~}\\
\mi{c} & ::= & - 2147483648\mid \dots \mid 0 \mid \dots \mid 2147483647 \\
\odot & ::= & \oplus \, \mid \, * \, \mid\, / \, \mid\,  \% \, \mid \, \& \, \mid\, \&\& \, \mid\,  \|  \, \mid\,  ''~|~'' \, 
\mid \, < \, \mid \, > \, \mid \, = \, \mid \, ! = \, \mid \, <= \, \mid\, >= &
\end{array}
$
\end{center}
\caption{Language syntax} \label{syntax}
\end{figure}

The syntax of Janus is 
in Fig.~\ref{syntax}. 
Janus is a simple imperative language 
with assignments, conditionals, loops, and procedure calls. 
A program consists of a statement (the main body) 
and a sequence of procedure declarations, 
introduced by the keyword $\ms{procedure}$ followed by the 
procedure name (an identifier). 

The reversible assignment instruction takes the form 
``$x \oplus = e$'' (or ``$x[e_1] \oplus = e_2$'' for array 
element assignments), which is equivalent to 
$x = x \oplus e$ (resp.\ $x[e_1] = x[e_1]\oplus e_2$), 
where $\oplus$ denotes an operator among plus, minus, and bitwise xor. 
Here, Janus requires that the left-hand side variable, 
$x$, does not occur in the expression $e$ 
(resp.\ in the expressions $e_1$ and $e_2$).

Reversible conditionals take the form 
``$\ms{if}~e_1~\ms{then}~s_1~\ms{else}~s_2~\ms{fi}~e_2$.'' 
If the test $e_1$ evaluates to $\mathit{true}$, then $s_1$ is executed, 
and the assertion $e_2$ must also evaluate to $true$. 
Similarly, if $e_1$ evaluates to $\mathit{false}$, then $s_2$
is executed, and the assertion $e_2$ must also be false.

Reversible loops have the form 
``$\ms{from}~e_1~\ms{do}~s_1~\ms{loop}~s_2~\ms{until}~e_2$.'' 
Initially, the assertion $e_1$ must evaluate to $\mathit{true}$. 
If so, $s_1$ is executed, followed by the evaluation of 
the test $e_2$. If $e_2$ is true, the loop terminates. 
Otherwise, $s_2$ is executed and control returns to $e_1$, 
which must evaluate to false in this case; 
execution then proceeds to execute $s_1$ and evaluate $e_2$ again,
and so forth (until $e_2$ is true).

Finally, we assume in this paper that procedures have neither
parameters nor local variables.  Consequently, a procedure call takes
the form ``$\ms{call}~\mathit{id}$'' and simply executes the procedure
body using the values of the program's global variables. Similarly,
``$\ms{uncall}~\mathit{id}$'' invokes the inverse procedure.

\begin{conference}
For a more detailed explanation of Janus, the reader is referred 
to \cite{JanusSem}.
\end{conference}

\begin{journal}


We now briefly describe the semantics of Janus. 
We primarily follow~\cite{JanusSem}, except for loops, 
where we adopt the simplified version 
of~\cite{YOKOYAMA201071,LoopSem}.


\begin{figure}[!t]
\begin{mathpar}
\inferrule*[left=Con] 
{ ~	} 
{ \tuple{\sigma}{c} \Downarrow  \llbracket c \rrbracket }     
\and
\inferrule*[left=Var] 
{ ~	} 
{ \tuple{\sigma}{ x} \Downarrow  \sigma(x)}      
\and
\inferrule*[left=Arr] 
{  \tuple{\sigma}{ e} \Downarrow	\tuple{\sigma}{ v} } 
{ \tuple{\sigma}{ x[e]} \Downarrow  x[v] }     
\and
\inferrule*[left=Bop] 
{ \tuple{\sigma}{ e_1} \Downarrow	 v_1 \\
\tuple{\sigma}{ e_2} \Downarrow	 v_2\\
\llbracket \odot \rrbracket(v_1,v_2) = v
} 
{ \tuple{\sigma}{ e_1 \odot e_2} \Downarrow  v }
\end{mathpar}
\caption{Semantics of expressions} \label{fig:exp}
\end{figure}

First, Figure~\ref{fig:exp} shows the rules for the evaluation of
expressions. A judgement of the form
$\tuple{\sigma}{e} \Downarrow  v$
denotes the evaluation of expression $e$ under \emph{store} 
$\sigma$, producing the value $v$. 
A \emph{store} $\sigma$ is a function from variable names and indexed variable names to values. We denote with $\sigma[x \mapsto v]$ the update of $\sigma$ which assigns value $v$ to variable $x$. 
For simplicity we assume that $\sigma$ is defined for all the variables used in the program, and we omit the bindings of variables whose value is zero.    
We let $\emptystore$ denote an empty store, i.e., an store that
maps every variable to zero.
Here, $\llbracket \_ \rrbracket$ denotes an \emph{interpretation} 
function, e.g., $\llbracket \mbox{\tt +} \rrbracket (2,3) =
2 + 3$, where ``$+$'' is the standard addition on integers.

\begin{figure}[!t]
\begin{mathpar}
\inferrule*[left=AssVar] 
{  \tuple{\sigma}{e} \Downarrow	v } 
{ \tuple{\sigma}{ x~\oplus=e} \Downarrow \sigma[x \mapsto \llbracket \oplus \rrbracket(\sigma(x),v)] }     
\and
\inferrule*[left=AssArr] 
{ \tuple{\sigma}{ e_l} \Downarrow	 v_l \\
\tuple{\sigma}{ e} \Downarrow	 v} 
{ \tuple{\sigma}{ x[e_l]~\oplus=e} \Downarrow \sigma[x[v_l] \mapsto \llbracket \oplus \rrbracket(\sigma(x[v_l]),v)] }     
\and
\inferrule*[left=Call] 
{ \tuple{\sigma}{ \Gamma(id)} \Downarrow  \sigma'  } 
{ \tuple{\sigma}{ \ms{call}~id} \Downarrow  \sigma'  }
\and
\inferrule*[left=UnCall] 
{  \tuple{\sigma}{\mathcal{I}\llbracket \Gamma(id) \rrbracket}  \Downarrow \sigma' } 
{ \tuple{\sigma}{ \ms{uncall}~id} \Downarrow  \sigma'  }
\and
\inferrule*[left=Seq] 
{
\tuple{\sigma}{ s_1}  \Downarrow  \sigma'\\
\tuple{\sigma'}{ s_2}  \Downarrow  \sigma''
} 
{ \tuple{\sigma}{ s_1~s_2} \Downarrow  \sigma''  }   
\and
\inferrule*[left=IfTrue] 
{ \tuple{\sigma}{ e_1} \Downarrow	 v_1 \\ \ms{is\_true?}(v_1) \\
\tuple{\sigma}{ s_1}\Downarrow	 \sigma' \\
\tuple{\sigma'}{ e_2} \Downarrow v_2 \\ \ms{is\_true?}(v_2)
} 
{ \tuple{\sigma}{ \ms{if}~ e_1 ~\ms{then}~ s_1 ~\ms{else}~s_2~\ms{fi} ~e_2} \Downarrow \sigma'  }     
\and
\inferrule*[left=IfFalse] 
{ \tuple{\sigma}{ e_1} \Downarrow	 v_1 \\ \ms{is\_false?}(v_1) \\
\tuple{\sigma}{ s_2}\Downarrow \sigma' \\
\tuple{\sigma'}{ e_2} \Downarrow v_2 \\ \ms{is\_false?}(v_2)
} 
{ \tuple{\sigma}{ \ms{if}~ e_1 ~\ms{then}~ s_1 ~\ms{else}~s_2~\ms{fi} ~e_2} \Downarrow  \sigma' }
\and
\inferrule*[left=LoopMain] 
{ 
\tuple{\sigma}{ e_1} \Downarrow	 v_1\\ 
\ms{is\_true?}(v_1) \\
\tuple{\sigma}{ s_1 } \Downarrow	 \sigma' \\
\tuple{\sigma'}{ (e_1,s_1,e_2,s_2)}\Downarrow	 \sigma''
} 
{ \tuple{\sigma}{ \ms{from}~ e_1 ~\ms{do}~ s_1 ~\ms{loop}~s_2~\ms{until}~e_2} \Downarrow  \sigma''  }
\and
\inferrule*[left=LoopBase] 
{ 
\tuple{\sigma}{ e_2} \Downarrow	 v_2\\ 
\ms{is\_true?}(v_2) 
} 
{ 
\tuple{\sigma}{ (e_1,s_1,e_2,s_2)} \Downarrow	 \sigma
}
\and
\inferrule*[left=LoopRec] 
{ 
\tuple{\sigma}{ e_2} \Downarrow	 v_2 \\
\ms{is\_false?}(v_2)\\ 
\tuple{\sigma}{ s_2}  \Downarrow	 \sigma'\\
\tuple{\sigma'}{ e_1} \Downarrow	 v_1 \\
\ms{is\_false?}(v_1)\\ 
\tuple{\sigma'}{ s_1}  \Downarrow	 \sigma''\\
\tuple{\sigma''}{ (e_1,s_1,e_2,s_2)}  \Downarrow	 \sigma'''
} 
{ 
\tuple{\sigma}{ (e_1,s_1,e_2,s_2)} \Downarrow	 \sigma'''  
}
\and
\inferrule*[left=Skip] 
{ ~ } 
{ \tuple{\sigma}{ \ms{skip}} \Downarrow \sigma }
\end{mathpar}
\caption{Big-step semantics of sequential Janus} \label{fig:bigstep}
\end{figure}

Fig.~\ref{fig:bigstep} shows the (big-step) 
semantics for statements. A judgement 
$\tuple{\sigma}{s} \Downarrow  \sigma'$ can be read as follows:
the execution of statement $s$ under store $\sigma$
produces a new program state denoted by store $\sigma'$.

Let us briefly describe the transition rules.
Assignments are dealt with rules \rules{AssVar} 
and \rules{AssArr}. Here, $(\oplus=)$ stands for
$(+=)$, $(-=)$, and $(\hat{~}=)$. 

Rule \rules{Call} handles procedure calls, $\ms{call}~id$, 
by executing the body $\Gamma(id)$. 
We assume $\Gamma$ maps procedure identifiers to 
their bodies. A procedure uncall (defined by rule 
\rules{UnCall}) proceeds analogously but executes
the \emph{inversion} of $\Gamma(id)$, which is
computed using the auxiliary function
$\mathcal{I}$ (defined in 
Fig.~\ref{fig:inverter}).\footnote{We note that, here, 
we consider rule \rules{Uncall} as defined in \cite{LLS24}, 
which is slightly different but equivalent to the original 
rule in cite{JanusSem}.}

%

The execution of sequences is dealt with in the obvious way by rule
\rules{Seq}.

The execution of conditionals is defined by rules
\rules{IfTrue} and \rules{IfFalse}, depending on whether the
initial test (and, thus, the final asserion) is true or false. 
Auxiliary predicates $\ms{is\_true?}(v)$ and $\ms{is\_false?}(v)$ 
check the truth value of $v$. 
	
The execution of a loop is defined by three rules: 
a rule for entering the loop, \rules{LoopMain}, a rule for 
exiting, \rules{LoopBase}, and a rule for iteration, 
\rules{LoopRec}. In order to enter the loop, rule \rules{LoopMain}
requires the initial assertion $e_1$ to be true. 
Then, $s_1$ is executed. Now, if the final test $e_2$ is true,
the loop ends (rule \rules{LoopBase}). Otherwise, rule
\rules{LoopRec} applies which executes $s_2$ and $s_1$,
requiring both $e_2$ and $e_1$ to be false. 


Finally, rule \rules{Skip} does not change the current store.

For a deeper discussion, the interested reader is referred to \cite{YOKOYAMA201071,LoopSem,JanusSem}.
	
\begin{figure}[t]
\begin{center}
$
\begin{array}{rcl@{~~~~~~}rcl@{ }}
\mathcal{I}_{op}\llbracket + \rrbracket ::= - \hspace{1,5cm} 
\mathcal{I}_{op}\llbracket - \rrbracket & ::= & + \hspace{1,5cm}
\mathcal{I}_{op}\llbracket \hat{~} \rrbracket ::= \hat{~} \\
\mathcal{I}\llbracket x\oplus= e \rrbracket & ::= & x~\mathcal{I}_{op}\llbracket \oplus \rrbracket= e \\
\mathcal{I}\llbracket  x[e_1]\oplus=e_2 \rrbracket & ::= & x[e_1]~\mathcal{I}_{op}\llbracket \oplus \rrbracket= e_2 \\
\mathcal{I}\llbracket \ms{if}~e_1 ~\ms{then} ~s_1~\ms{else}~s_2~\ms{fi}~e_2 \rrbracket & ::= & \ms{if}~e_2 ~\ms{then} ~\mathcal{I}\llbracket s_1 \rrbracket~\ms{else}~\mathcal{I}\llbracket s_2 \rrbracket~\ms{fi}~e_1 \\
\mathcal{I}\llbracket \ms{from}~e_1~\ms{do}~s_1~\ms{loop}~s_2~\ms{until}~e_2 \rrbracket & ::= & \ms{from}~e_2~\ms{do}~\mathcal{I}\llbracket s_1 \rrbracket~\ms{loop}~\mathcal{I}\llbracket s_2 \rrbracket~\ms{until}~e_1 \\		
\mathcal{I}\llbracket \ms{call}~id \rrbracket & ::= & \ms{uncall}~id \\
\mathcal{I}\llbracket \ms{uncall}~id \rrbracket & ::= &   \ms{call}~id \\
\mathcal{I}\llbracket \ms{skip} \rrbracket & ::= & \ms{skip} \\
\mathcal{I}\llbracket s_1~s_2 \rrbracket & ::= & \mathcal{I}\llbracket s_2 \rrbracket ~ \mathcal{I}\llbracket s_1 \rrbracket \\
\mathcal{I}\llbracket \ms{procedure}~id~s \rrbracket & ::= & \ms{procedure} ~id^{-1}~\mathcal{I}\llbracket s \rrbracket \\
\end{array}
$
\end{center}
\caption{Inverter function for Janus statements} \label{fig:inverter}
\end{figure}
					
\section{A Small-Step Semantics for Janus} \label{sec:smallstep}

\end{journal}


\begin{journal}
Now, we introduce a \emph{small-step} semantics for Janus.
\end{journal}
\begin{conference}
Now, we introduce a \emph{small-step} semantics for Janus
(in contrast to the big-step 
semantics introduced in 
\cite{YOKOYAMA201071,LoopSem,JanusSem}).
\end{conference}
Moving from a big-step semantics to a small-step one 
typically involves adding a \emph{stack} to the configurations 
(see, e.g., \cite{Sestoft97,SPJ97}). 
Alternatively, one can define a small-step 
semantics using \emph{contexts} to specify where reductions can occur,
as done in \cite{LLS24}.
In this work, though, we prefer stacks, as they simplify 
the definition of a reversible version. Furthermore, they may help in clarifying the reason leading to an execution error,
which could be useful 
in a potential extension of the semantics involving 
error handling.

\begin{conference}
\begin{figure}[!t]
\begin{mathpar}
\inferrule*[left=Con] 
{ ~	} 
{ \tuple{\sigma}{c} \Downarrow  \llbracket c \rrbracket }     
\and
\inferrule*[left=Var] 
{ ~	} 
{ \tuple{\sigma}{ x} \Downarrow  \sigma(x)}      
\and
\inferrule*[left=Arr] 
{  \tuple{\sigma}{ e} \Downarrow	\tuple{\sigma}{ v} } 
{ \tuple{\sigma}{ x[e]} \Downarrow  x[v] }     
\and
\inferrule*[left=Bop] 
{ \tuple{\sigma}{ e_1} \Downarrow	 v_1 \\
\tuple{\sigma}{ e_2} \Downarrow	 v_2\\
\llbracket \odot \rrbracket(v_1,v_2) = v
} 
{ \tuple{\sigma}{ e_1 \odot e_2} \Downarrow  v }
\end{mathpar}
\caption{Semantics of expressions} \label{fig:exp}
\end{figure}
\end{conference}

For the evaluation of expressions, we consider the same rules 
of the original big-step semantics \cite{JanusSem},
which are shown in Fig.~\ref{fig:exp}.\footnote{Note that this 
means that the evaluation of expressions
is not reversible (but considered a one-step
evaluation).}
\begin{conference}
A judgement of the form
$\tuple{\sigma}{e} \Downarrow  v$
denotes the evaluation of expression $e$ under \emph{store} 
$\sigma$, producing the value $v$. 
A \emph{store} $\sigma$ is a function from variable names and indexed variable names to values. We denote with $\sigma[x \mapsto v]$ the update of $\sigma$ which assigns value $v$ to variable $x$. 
For simplicity we assume that $\sigma$ is defined for all the variables used in the program, and we omit the bindings of variables whose value is zero.    
We let $\emptystore$ denote an empty store, i.e., an store that
maps every variable to zero.
Here, $\llbracket \_ \rrbracket$ denotes an \emph{interpretation} 
function, e.g., $\llbracket \mbox{\tt +} \rrbracket (2,3) =
2 + 3$, where ``$+$'' is the standard addition on integers.
\end{conference}


A \emph{configuration} in our semantics has the form
$\config{\sigma,s,\s}$, where $\sigma$ is a global
store, $s$ is a statement (the \emph{control} 
of the configuration), and $\s$ is a \emph{stack}, 
i.e., a list of expressions of the form 
$\ms{seq}(e_2)$,
$\ms{if\_true}(e_2)$, 
$\ms{if\_false}(e_2)$,
$\ms{loop1}(e_1,s_1,e_2,s_2)$, and 
$\ms{loop2}(e_1,s_1,e_2,s_2)$.
%
%
Given a Janus program $p = (s~p_1~\ldots~p_n)$, 
an \emph{initial} configuration has the form
$\config{\emptystate,s,\nul}$, where $\nul$ is 
an empty list representing the initial, empty stack.

\begin{figure}[t]
\begin{mathpar}
\inferrule*[left=AssVarS] 
{  \tuple{\sigma}{e} \Downarrow	 v } 
{ \config{\sigma,x~\oplus=e,\s} \rightarrow \config{\sigma[x \mapsto \llbracket \oplus \rrbracket(\sigma(x),v)],\ms{skip},\s} }     
\and
\inferrule*[left=AssArrS] 
{ 
\tuple{\sigma}{ e_l} \Downarrow	 v_l \\
\tuple{\sigma}{ e} \Downarrow	 v
} 
{
\config{\sigma,x[e_l]~\oplus=e,\s} \rightarrow 
\config{\sigma[x[v_l] \mapsto \llbracket \oplus \rrbracket(\sigma(x[v_l]),v)],\ms{skip},\s} 
}     
\and
\inferrule*[left=CallS] 
{  ~} 
{ \config{\sigma,\ms{call}~id,\s} \rightarrow  \config{\sigma,\Gamma(id),\s}  }
\and
\inferrule*[left=UnCallS] 
{   ~ } 
{ \config{\sigma,\ms{uncall}~id,\s} \rightarrow 
\config{\sigma,\mathcal{I}\llbracket \Gamma(id) \rrbracket,\s}
}
\and
\inferrule*[left=Seq1]
{ ~ }
{
\config{\sigma,s_1~s_2,\s} \rightarrow  
\config{\sigma,s_1,\ms{seq}(s_2)\cons\s}
}
\and
\inferrule*[left=Seq2]
{ ~ }
{
\config{\sigma,\ms{skip},\ms{seq}(s_2)\cons\s} \rightarrow  \config{\sigma,s_2,\s} 
}
\end{mathpar}
\caption{Small-step semantics: rules for basic constructs} \label{fig:smallstep}
\end{figure}

\begin{conference}
\begin{figure}[t]
\begin{center}
$
\begin{array}{rcl@{~~~~~~}rcl@{ }}
\mathcal{I}_{op}\llbracket + \rrbracket ::= - \hspace{1,5cm} 
\mathcal{I}_{op}\llbracket - \rrbracket & ::= & + \hspace{1,5cm}
\mathcal{I}_{op}\llbracket \hat{~} \rrbracket ::= \hat{~} \\
\mathcal{I}\llbracket x\oplus= e \rrbracket & ::= & x~\mathcal{I}_{op}\llbracket \oplus \rrbracket= e \\
\mathcal{I}\llbracket  x[e_1]\oplus=e_2 \rrbracket & ::= & x[e_1]~\mathcal{I}_{op}\llbracket \oplus \rrbracket= e_2 \\
\mathcal{I}\llbracket \ms{if}~e_1 ~\ms{then} ~s_1~\ms{else}~s_2~\ms{fi}~e_2 \rrbracket & ::= & \ms{if}~e_2 ~\ms{then} ~\mathcal{I}\llbracket s_1 \rrbracket~\ms{else}~\mathcal{I}\llbracket s_2 \rrbracket~\ms{fi}~e_1 \\
\mathcal{I}\llbracket \ms{from}~e_1~\ms{do}~s_1~\ms{loop}~s_2~\ms{until}~e_2 \rrbracket & ::= & \ms{from}~e_2~\ms{do}~\mathcal{I}\llbracket s_1 \rrbracket~\ms{loop}~\mathcal{I}\llbracket s_2 \rrbracket~\ms{until}~e_1 \\		
\mathcal{I}\llbracket \ms{call}~id \rrbracket & ::= & \ms{uncall}~id \\
\mathcal{I}\llbracket \ms{uncall}~id \rrbracket & ::= &   \ms{call}~id \\
\mathcal{I}\llbracket \ms{skip} \rrbracket & ::= & \ms{skip} \\
\mathcal{I}\llbracket s_1~s_2 \rrbracket & ::= & \mathcal{I}\llbracket s_2 \rrbracket ~ \mathcal{I}\llbracket s_1 \rrbracket \\
\mathcal{I}\llbracket \ms{procedure}~id~s \rrbracket & ::= & \ms{procedure} ~id^{-1}~\mathcal{I}\llbracket s \rrbracket \\
\end{array}
$
\end{center}
\caption{Inverter function for Janus statements \cite{JanusSem}} \label{fig:inverter}
\end{figure}
\end{conference}

The small-step semantics is defined using a set of 
transition rules. We split them in 
two groups: basic statements (Fig.~\ref{fig:smallstep})
and conditional and
loop statements 
(Fig.~\ref{fig:ifloop}).
A \emph{transition step} with the small-step semantics
has the form 
$\config{\sigma,s,\s} \to \config{\sigma',s',\s'}$.
Let us briefly explain the first set of rules (Fig.~\ref{fig:smallstep}):
\begin{itemize}
\item Rules \rules{AssVarS} and \rules{AssArrS} 
define the assignment. 
\rules{AssVarS} evaluates the new value of the variable and 
updates the store, while \rules{AssArrS} evaluates the index 
of the array and 
the new value and updates the value of the array 
in the store.

\item A procedure call (defined by rule \rules{CallS}) 
reduces to the procedure body, denoted by $\Gamma(id)$.
%
Similarly, a procedure uncall (defined by rule 
\rules{UnCallS}) proceeds in much
the same way, except that it reduces to the 
\emph{inversion} 
$\mathcal{I}\llbracket\Gamma(id)\rrbracket$ 
of the body of procedure $id$, 
retrieved using the mapping $\Gamma$. 
\begin{conference}
Here $\mathcal{I}$ is the Janus 
inverter function defined in Fig.~\ref{fig:inverter},
which given a statement computes another statement 
executing the inverse computation.
\end{conference}

\item Finally, the execution of a sequence is defined 
by rules \rules{Seq1} and \rules{Seq2}. Here, we assume 
that the sequence operator is right-associative 
so that $s_1~s_2~\ldots~s_n$ is
interpreted as $s_1~(s_2~(\ldots (s_{n-1}~s_n)\ldots))$.
Given a sequence $s_1~s_2$, rule \rules{Seq1} moves
$s_2$ to the stack (adding an element of the form
$\mathsf{seq}(s_2)$ on top of the stack) 
and starts the evaluation of $s_1$.
Once $s_1$ is fully evaluated, rule \rules{Seq2} moves
$s_2$ back to the control of the configuration.
\end{itemize}
\begin{figure}[t]
\begin{mathpar}
\inferrule*[left=IfTrue1] 
{ 
\tuple{\sigma}{e_1} \Downarrow	 v_1 \\ 
\ms{is\_true?}(v_1) \\
} 
{ \config{\sigma,\ms{if}~ e_1 ~\ms{then}~ s_1 ~\ms{else}~s_2~\ms{fi} ~e_2,\s} \rightarrow  \config{\sigma,s_1,\ms{if\_true}(e_2)\cons\s}  
}
\and 
\inferrule*[left=IfTrue2] 
{ 
\tuple{\sigma}{e_2} \Downarrow	 v_2 \\ 
\ms{is\_true?}(v_2) \\
}
{ \config{\sigma,\ms{skip},\ms{if\_true}(e_2)\cons\s}  
\rightarrow 
\config{\sigma,\ms{skip},\s}
}
\and 
\and 
\inferrule*[left=IfFalse1] 
{ 
\tuple{\sigma}{ e_1} \Downarrow	 v_1 \\ 
\ms{is\_false?}(v_1) \\
} 
{ \config{\sigma,\ms{if}~ e_1 ~\ms{then}~ s_1 ~\ms{else}~s_2~\ms{fi} ~e_2,\s} \rightarrow  \config{\sigma,s_2,\ms{if\_false}(e_2)\cons\s}  
}
\and 
\inferrule*[left=IfFalse2] 
{ 
\tuple{\sigma}{e_2} \Downarrow	 v_2 \\ 
\ms{is\_false?}(v_2) \\
}
{ \config{\sigma,\ms{skip},\ms{if\_false}(e_2)\cons\s}  
\rightarrow 
\config{\sigma,\ms{skip},\s}
}
\and
\inferrule*[left=LoopMainS] 
{ 
\tuple{\sigma}{e_1} \Downarrow	 v_1\\ 
\ms{is\_true?}(v_1) 
} 
{
\config{\sigma,\ms{from}~ e_1 ~\ms{do}~ s_1 ~\ms{loop}~s_2~\ms{until} ~e_2,\s} 
\rightarrow  
\config{\sigma,s_1,\ms{loop1}(e_1,s_1,e_2,s_2)\cons\s}  
}
\and
\inferrule*[left=LoopBaseS] 
{ 
\tuple{\sigma}{e_2} \Downarrow	 v_2\\ 
\ms{is\_true?}(v_2) 
} 
{
\config{\sigma,\ms{skip},\ms{loop1}(e_1,s_1,e_2,s_2)\cons\s}  
\rightarrow
\config{\sigma,\ms{skip},\s}  
}
\and
\inferrule*[left=Loop1] 
{ 
\tuple{\sigma}{e_2} \Downarrow	 v_2\\ 
\ms{is\_false?}(v_2) 
} 
{
\config{\sigma,\ms{skip},\ms{loop1}(e_1,s_1,e_2,s_2)\cons\s}  
\rightarrow
\config{\sigma,s_2,\ms{loop2}(e_1,s_1,e_2,s_2)\cons\s}  
}
\and
\inferrule*[left=Loop2] 
{ 
\tuple{\sigma}{e_1} \Downarrow	 v_1\\ 
\ms{is\_false?}(v_1) 
} 
{
\config{\sigma,\ms{skip},\ms{loop2}(e_1,s_1,e_2,s_2)\cons\s}  
\rightarrow
\config{\sigma,s_1,\ms{loop1}(e_1,s_1,e_2,s_2)\cons\s}  
}
\end{mathpar}
\caption{Small-step semantics: rules for if and for loop} \label{fig:ifloop}
\end{figure}
The transition rules for conditionals and for loops are shown in Fig.~\ref{fig:ifloop}. We distinguish the following
cases:
\begin{itemize}
\item For conditionals, first, either rule \rules{IfTrue1} or
\rules{IfFalse1} is applied, depending on whether the
condition $e_1$ evaluates to $\mathit{true}$ or 
$\mathit{false}$, respectively. Rule \rules{IfTrue1}
(resp.\ \rules{IfFalse1})
pushes a new element of the form 
$\mathsf{if\_true}(e_2)$ 
(resp.\ $\mathsf{if\_false}(e_2)$)
onto the stack.

\item Once the corresponding
statement is fully evaluated, rule \rules{IfTrue2} (resp.\
\rules{IfFalse2}) applies if the assertion evaluates to the
same value of the test. 

\item Regarding loops, rule \rules{LoopMain} enters the loop,
requiring the assertion $e_1$ to be true and starting
the execution of statement $s_1$. Furthermore, an
element of the form $\mathsf{loop1}(e_1,s_1,e_2,s_2)$ is
added on top of the stack.

\item Each iteration of the loop is formalized using
rules \rules{Loop1} and \rules{Loop2}. The first rule,
\rules{Loop1},
applies when $s_1$ is fully evaluated and the condition
$e_2$ evaluates to $\mathit{false}$. Here, $s_2$ is moved to the
control of the configuration and 
$\mathsf{loop1}(e_1,s_1,e_2,s_2)$ is replaced by
$\mathsf{loop2}(e_1,s_1,e_2,s_2)$ in the stack.
Rule \rules{Loop2} is the counterpart of rule \rules{Loop1}
but applies when $s_2$ is fully evaluated. In this case,
$e_1$ must be evaluated to $\mathit{false}$. Then,
$s_1$ is moved to the
control of the configuration and 
$\mathsf{loop2}(e_1,s_1,e_2,s_2)$ is changed back to
$\mathsf{loop1}(e_1,s_1,e_2,s_2)$ in the stack.

\item Finally, rule \rules{LoopBaseS} applies when $s_1$
is fully evaluated and, moreover, the test $e_2$ is 
true, terminating loop execution.
\end{itemize}
We denote with $\rightarrow^*$ the reflexive 
and transitive closure of $\rightarrow$.

\begin{conference}
The equivalence of the small-step semantics with 
the big-step semantics of \cite{YOKOYAMA201071,LoopSem,JanusSem}
(and, thus, the equivalence to the small-step semantics
based on contexts of \cite{LLS24}) can be found in the
extended version of this paper \cite{LV26arxiv}.
\end{conference}

\begin{journal}

Now, we prove the equivalence between the big-step and the 
small-step semantics, when the program terminates without 
errors.   
For this purpose, we follow a proof scheme similar to that
of \cite{Sestoft97}. 

In the following, we say that $s$ is a \emph{proper} statement
if $s\neq(e_1,s_1,e_2,s_2)$. Our first lemma shows that the
small-step semantics is \emph{complete}, i.e., it can simulate 
all proof derivations with the natural semantics.

\begin{lemma} \label{lemma:completeness}
  Let $\tuple{\sigma}{s} \Downarrow \sigma'$ be a proof with the
  big-step semantics of Figure~\ref{fig:bigstep}.
  Then, we have 
  \begin{itemize}
  \item[\rm (a)] $\config{\sigma,s,\s} \to^\ast \config{\sigma',\ms{skip},\s}$ 
  if $s$ is a proper statement, and
  \item[\rm (b)] $\config{\sigma,\ms{skip},\ms{loop1}(e_1,s_1,e_2,s_2)\cons\s} \to^\ast
  \config{\sigma',\ms{skip},\s}$ if $s = (e_1,s_1,e_2,s_2)$,
  \end{itemize} 
  where $\s$ is an arbitrary stack.
\end{lemma}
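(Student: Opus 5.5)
We proceed by induction on the height (number of rule applications) of the proof of $\tuple{\sigma}{s} \Downarrow \sigma'$ in the big-step semantics of Figure~\ref{fig:bigstep}, proving (a) and (b) simultaneously and with the stack $\s$ universally quantified in the induction hypothesis. This quantification matters because the subderivations are run under larger stacks, such as $\ms{seq}(s_2)\cons\s$ or $\ms{if\_true}(e_2)\cons\s$. The case analysis follows the last rule applied. Rules \rules{AssVar}, \rules{AssArr}, \rules{Call}, \rules{UnCall}, \rules{Seq}, \rules{IfTrue}, \rules{IfFalse}, \rules{LoopMain} and \rules{Skip} have a proper statement in the conclusion and fall under (a). Rules \rules{LoopBase} and \rules{LoopRec} have a tuple $(e_1,s_1,e_2,s_2)$ in the conclusion and fall under (b).

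\textbf{Base and simple cases.}
\begin{itemize}
\item \rules{Skip}: zero steps suffice.
\item \rules{AssVar} and \rules{AssArr}: these are a single step of \rules{AssVarS}, resp.\ \rules{AssArrS}, since the same expression evaluations are used in both semantics.
\item \rules{Call}: one \rules{CallS} step yields $\config{\sigma,\Gamma(id),\s}$, and the induction hypothesis (a) applied to the premise $\tuple{\sigma}{\Gamma(id)}\Downarrow\sigma'$ finishes. Here $\Gamma(id)$ is a proper statement, since tuples never appear in source code.
\item \rules{UnCall}: the same argument with \rules{UnCallS}, using that $\mathcal{I}$ maps statements to statements.
\item \rules{Seq}: apply \rules{Seq1} to get $\config{\sigma,s_1,\ms{seq}(s_2)\cons\s}$. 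The induction hypothesis on $s_1$, with stack $\ms{seq}(s_2)\cons\s$, reaches $\config{\sigma',\ms{skip},\ms{seq}(s_2)\cons\s}$. Then \rules{Seq2} gives $\config{\sigma',s_2,\s}$, and the induction hypothesis on $s_2$ yields $\config{\sigma'',\ms{skip},\s}$.
\item \rules{IfTrue}: \rules{IfTrue1} uses the premise on $e_1$. The induction hypothesis runs $s_1$ over the stack $\ms{if\_true}(e_2)\cons\s$. Finally \rules{IfTrue2} uses the premise $\tuple{\sigma'}{e_2}\Downarrow v_2$ with $\ms{is\_true?}(v_2)$. The case \rules{IfFalse} is symmetric.
\end{itemize}

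\textbf{Loop cases.} The loop rules are where (b) is needed, and I expect them to be the main point of care, since the stack shape has to match.
\begin{itemize}
\item \rules{LoopMain}: \rules{LoopMainS} gives $\config{\sigma,s_1,\ms{loop1}(e_1,s_1,e_2,s_2)\cons\s}$. The induction hypothesis (a) on $s_1$, with stack $\ms{loop1}(\ldots)\cons\s$, reaches $\config{\sigma',\ms{skip},\ms{loop1}(\ldots)\cons\s}$. This is exactly the left-hand side of (b), so the induction hypothesis (b) on the subderivation $\tuple{\sigma'}{(e_1,s_1,e_2,s_2)}\Downarrow\sigma''$ finishes.
\item \rules{LoopBase}: a single \rules{LoopBaseS} step suffices.
\item \rules{LoopRec}: \rules{Loop1}, justified by $e_2$ being false, gives $\config{\sigma,s_2,\ms{loop2}(\ldots)\cons\s}$. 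The induction hypothesis (a) on $s_2$ reaches $\config{\sigma',\ms{skip},\ms{loop2}(\ldots)\cons\s}$. Then \rules{Loop2}, justified by $e_1$ being false at $\sigma'$, gives $\config{\sigma',s_1,\ms{loop1}(\ldots)\cons\s}$. The induction hypothesis (a) on $s_1$ reaches $\config{\sigma'',\ms{skip},\ms{loop1}(\ldots)\cons\s}$, and the induction hypothesis (b) on the recursive premise yields $\config{\sigma''',\ms{skip},\s}$.
\end{itemize}

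Each premise used is a strict subderivation, so the induction is well founded. The side conditions of the small-step rules coincide exactly with the premises of the big-step rules, and $\to^\ast$ is closed under concatenation, so all cases compose.
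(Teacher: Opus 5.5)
Your proposal is correct and takes essentially the same route as the paper. Both argue by induction on the big-step derivation with the stack left arbitrary, simulating each big-step rule by the matching small-step rules and using part (b) for the loop tuple in the \rules{LoopMain} and \rules{LoopRec} cases; you also get right two details the paper's write-up slips on, namely that the \rules{IfTrue2} step ends in $\config{\sigma',\ms{skip},\s}$ and that the \rules{Loop2} side condition is evaluated at $\sigma'$.
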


\begin{proof}
  We prove the claim by induction on the structure of the
  proof $\tuple{\sigma}{s} \Downarrow \sigma'$. For simplicity, 
  we implicitly assume that a sentence is \emph{proper} unless 
  otherwise stated. In particular, only rule \textsc{LoopMain}
  can introduce a non-proper statement, while only rules
  \textsc{LoopBase} and \textsc{LoopRec} can be applied to
  evaluate a non-proper statement. 
  We distinguish the following cases depending 
  on the applied rule:
  \begin{description}
  \item[\rm (\textsc{AssVar})] Then, we have 
  $\tuple{\sigma}{x~\oplus=e} \Downarrow 
  \sigma[x \mapsto \llbracket \oplus \rrbracket(\sigma(x),v)]$ 
  by rule \textsc{AssVar}. Hence, 
  $\config{\sigma,x~\oplus=e,\s} \to 
  \config{\sigma[x \mapsto \llbracket \oplus 
  \rrbracket(\sigma(x),v)],\ms{skip},\s}$ 
  by rule \textsc{AssVarS}.

  \item[\rm (\textsc{AssArr})] This case is perfectly analogous
  to the previous one by applying rule \textsc{AssArrS} instead.
  
  \item[\rm (\textsc{Call})] We have 
  $\tuple{\sigma}{\ms{call}~id} \Downarrow  \sigma'$,
  with $\tuple{\sigma}{\Gamma(id)} \Downarrow  \sigma'$.
  Then, the following derivation holds:
  \[
  \begin{array}{lll}
      & \config{\sigma,\ms{call}~id,\s} \\
  \to & \config{\sigma,\Gamma(id),\s} & \mbox{(by rule \textsc{CallS})}\\
  \to^\ast & \config{\sigma',\ms{skip},\s} & \mbox{(by ind.\ hypothesis)}
  \end{array}
  \]
  
  \item[\rm (\textsc{UnCall})] This case is perfectly analogous
  to the previous one by applying rule \textsc{UnCallS} instead.
  
  \item[\rm (\textsc{Seq})] We have 
  $\tuple{\sigma}{s_1~s_2} \Downarrow  \sigma''$, with 
  $\tuple{\sigma}{s_1}  \Downarrow  \sigma'$ and
  $\tuple{\sigma'}{s_2}  \Downarrow  \sigma''$. 
  Then, the following derivation holds:
  \[
  \begin{array}{lll}
      & \config{\sigma,s_1~s_2,\s} \\
  \to & \config{\sigma,s_1,\ms{seq}(s_2)\cons\s} & \mbox{(by rule \textsc{Seq1})}\\
\to^\ast & \config{\sigma',\ms{skip},\ms{seq}(s_2)\cons\s} & \mbox{(by ind.\ hypothesis on the first premise)}\\
  \to & \config{\sigma',s_2,\s} & \mbox{(by rule \textsc{Seq2})}\\
\to^\ast & \config{\sigma'',\ms{skip},\s} & \mbox{(by ind.\ hypothesis on the second premise)}\\  
  \end{array}
  \]

  \item[\rm (\textsc{IfTrue})] We have 
  $\tuple{\sigma}{ \ms{if}~ e_1 ~\ms{then}~ s_1 
  ~\ms{else}~s_2~\ms{fi} ~e_2} \Downarrow \sigma'$ with
  $\tuple{\sigma}{ e_1} \Downarrow	 v_1$, $\ms{is\_true?}(v_1)$,
  $\tuple{\sigma}{ s_1}\Downarrow \sigma'$,
  $\tuple{\sigma'}{ e_2} \Downarrow v_2$, and $\ms{is\_true?}(v_2)$.
Then, the following derivation holds:
  \[
  \begin{array}{lll}
      & \config{\sigma,\ms{if}~ e_1 ~\ms{then}~ s_1 
  ~\ms{else}~s_2~\ms{fi} ~e_2,\s} \\
  \to & \config{\sigma,s_1,\ms{if\_true}(e_2)\cons\s} & \mbox{(by rule \textsc{IfTrue1} since $\tuple{\sigma}{ e_1} \Downarrow v_1$}\\ 
  &&\mbox{~and $\ms{is\_true?}(v_1)$)}\\
  \to^\ast & \config{\sigma',\ms{skip},\ms{if\_true}(e_2)\cons\s} & \mbox{(by ind.\ hypothesis on $\tuple{\sigma}{ s_1}\Downarrow \sigma'$)}\\
  \to & \config{\sigma',s_2,\s} & \mbox{(by rule \textsc{IfTrue2}
  since $\tuple{\sigma'}{ e_2} \Downarrow v_2$}\\
  &&\mbox{~and $\ms{is\_true?}(v_2)$)}\\
  \end{array}
  \]
  
  \item[\rm (\textsc{IfFalse})] This case is perfectly analogous
  to the previous one by applying rules \textsc{IfFalse1} and
  \textsc{IfFalse2} from the small-step semantics.
  
  \item[\rm (\textsc{LoopMain})] We have 
  $\tuple{\sigma}{ \ms{from}~ e_1 ~\ms{do}~ s_1 ~\ms{loop}~s_2~\ms{until}~e_2} \Downarrow  \sigma''$
  with $\tuple{\sigma}{ e_1} \Downarrow v_1$,
  $\ms{is\_true?}(v_1)$, 
  $\tuple{\sigma}{ s_1 } \Downarrow	 \sigma'$, and
  $\tuple{\sigma'}{ (e_1,s_1,e_2,s_2)}\Downarrow \sigma''$.
  Then, the following derivation holds:
  \[
  \begin{array}{lll}
   & \config{\sigma,\ms{from}~ e_1 ~\ms{do}~ s_1 ~\ms{loop}~s_2~\ms{until}~e_2,\s} \\
  \to & \config{\sigma,s_1,\ms{loop1}(e_2,s_1,s_2,e_2)\cons\s} & \mbox{(by rule \textsc{LoopMainS} since }\\
  &&\mbox{~$\tuple{\sigma}{ e_1} \Downarrow v_1$ and $\ms{is\_true?}(v_1)$)}\\
  \to^\ast & \config{\sigma',\ms{skip},\ms{loop1}(e_2,s_1,s_2,e_2)\cons\s} & \mbox{(by ind.\ hypothesis on $\tuple{\sigma}{ s_1 } \Downarrow \sigma'$)}\\
  \to^\ast & \config{\sigma'',\ms{skip},\s} & \mbox{(by ind.\ hypothesis }\\
  &&\mbox{~on $\tuple{\sigma'}{ (e_1,s_1,e_2,s_2)}\Downarrow \sigma''$)}\\
  \end{array}
  \]

  \item[\rm (\textsc{LoopBase})] We have 
  $\tuple{\sigma}{ (e_1,s_1,e_2,s_2)} \Downarrow \sigma$,
  with $\tuple{\sigma}{e_2} \Downarrow v_2$ and 
  $\ms{is\_true?}(v_2)$. Then, the derivation
  $\config{\sigma,\ms{skip},\ms{loop1}(e_1,s_1,e_2,s_2)\cons\s} \to \config{\sigma,\ms{skip},\s}$ holds by applying
  rule \textsc{LoopBaseS}.

  \item[\rm (\textsc{LoopRec})] We have 
  $\tuple{\sigma}{(e_1,s_1,e_2,s_2)} \Downarrow \sigma'''$, with
  $\tuple{\sigma}{e_2} \Downarrow v_2$,
  $\ms{is\_false?}(v_2)$, 
  $\tuple{\sigma}{s_2} \Downarrow \sigma'$,
  $\tuple{\sigma'}{e_1} \Downarrow v_1$,
  $\ms{is\_false?}(v_1)$,  
  $\tuple{\sigma'}{s_1}  \Downarrow	 \sigma''$, and
  $\tuple{\sigma''}{ (e_1,s_1,e_2,s_2)} \Downarrow	 \sigma'''$.
  Then, the following derivation holds:
  \[
  \begin{array}{lll}
   & \config{\sigma,\ms{skip},\ms{loop1}(e_1,s_1,e_2,s_2)\cons\s}\\
   \to & \config{\sigma,s_2,\ms{loop2}(e_1,s_1,e_2,s_2)\cons\s}
   & \mbox{(by rule \textsc{Loop1} since $\tuple{\sigma}{e_2} \Downarrow v_2$ }\\
   &&\mbox{~and $\ms{is\_false?}(v_2)$)}\\
   \to^\ast & \config{\sigma',\ms{skip},\ms{loop2}(e_1,s_1,e_2,s_2)\cons\s}
   & \mbox{(by ind.\ hypothesis on $\tuple{\sigma}{s_2} \Downarrow \sigma'$)}\\
  \to & \config{\sigma',s_1,\ms{loop1}(e_1,s_1,e_2,s_2)\cons\s}
   & \mbox{(by rule \textsc{Loop2} since $\tuple{\sigma}{e_1} \Downarrow v_1$ }\\
   &&\mbox{~and $\ms{is\_false?}(v_1)$)}\\
  \to^\ast & \config{\sigma'',\ms{skip},\ms{loop1}(e_1,s_1,e_2,s_2)\cons\s}
   & \mbox{(by ind.\ hypothesis on $\tuple{\sigma'}{s_1} \Downarrow \sigma''$)}\\
   \to^\ast & \config{\sigma''',\ms{skip},\s} &
   \mbox{(by ind. hypothesis }\\
   &&\mbox{~on $\tuple{\sigma''}{ (e_1,s_1,e_2,s_2)} \Downarrow	 \sigma'''$)}
   \end{array}
   \]
  
  \item[\rm (\textsc{Skip})] We have $s = \ms{skip}$ and
  $\tuple{\sigma}{ \ms{skip}} \Downarrow \sigma$. Thus,
  $\config{\sigma,\ms{skip},\s} \to^\ast
  \config{\sigma,\ms{skip},\s}$ since $\to^\ast$ is reflexive. \qed
  \end{description}
\end{proof}
Let us now consider the soundness of the small-step semantics.
For this purpose, we first introduce the notion of a
\emph{balanced} derivation:

\begin{definition}[balanced derivation]
  A derivation $\config{\sigma,s,\s} \to^\ast 
  \config{\sigma',s',\s'}$ is balanced if $\s = \s'$ and every
  intermediate stack is equal or extends the initial one.
\end{definition}
In particular, any successful derivation of the form
$\config{\id,s,\nul} \to^\ast \config{\sigma,\ms{skip},\nul}$
is trivially balanced.
We also consider another type of derivations:

\begin{definition}[loop derivation]
  A derivation 
  $\config{\sigma,\ms{skip},\ms{loop1}(e_1,s_1,e_2,s_2)\cons\s}$
  $\to^\ast \config{\sigma',\ms{skip},\s}$ is called a
  \emph{loop derivation} if every intermediate stack
  is equal or extends either  
  $\ms{loop1}(e_1,s_1,e_2,s_2)\cons\s$
  or $\ms{loop2}(e_1,s_1,e_2,s_2)\cons\s$.
\end{definition}
Intuitively speaking, a loop derivation corresponds to
sequences of steps with rules \textsc{Loop1} and \textsc{Loop2},
ending with a final application of rule \textsc{LoopBaseS}.

\begin{definition}[trace]
  The sequence of transition rules applied in a derivation
  is called its \emph{trace}. A balanced (resp.\ loop) trace 
  is the trace of a balanced (resp.\ loop) derivation.
\end{definition}
We have several possibilities for characterizing a trace 
as balanced. An empty trace, denoted by $\epsilon$, 
is clearly balanced. 
For non-empty traces, we have several possibilities. 
In particular, a balanced trace cannot begin with the 
application of rules \textsc{Seq2}, \textsc{IfTrue2}, 
\textsc{IfFalse2}, or \textsc{LoopBaseS}, as these would 
produce an intermediate stack that does not extend 
the initial one. In summary, all balanced and loop
traces can be derived from the following grammar:
\[
\begin{array}{llll}
  \mathit{bal} & ::= & \epsilon \mid \textsc{AssVarS} 
  \mid \textsc{AssArrS}
  \mid \textsc{CallS}~\mathit{bal}
  \mid \textsc{UnCallS}~\mathit{bal}\\
  && \mid \textsc{Seq1}~\mathit{bal}~\textsc{Seq2}~\mathit{bal} \\
  &&\mid \textsc{IfTrue1}~\mathit{bal}~\textsc{IfTrue2}
  \mid \textsc{IfFalse1}~\mathit{bal}~\textsc{IfFalse2}\\
  &&\mid \textsc{LoopMainS}~\mathit{bal}~\mathit{loop}
  \\[1ex]
  \mathit{loop} & ::= & \textsc{LoopBaseS}
  \mid \textsc{Loop1}~\mathit{bal}~\textsc{Loop2}~\mathit{bal}~\mathit{loop}
\end{array}
\]
The following lemma states the soundness of the small-step
semantics.

\begin{lemma} \label{lemma:soundness}
  Consider a derivation of the form
  $\config{\sigma,s,\s_1} \to^\ast \config{\sigma',\ms{skip},\s_2}$.
  \begin{itemize}
  \item If the derivation is balanced, then $s$ is a proper 
  expression, $\s_1=\s_2$ and $\tuple{\sigma}{s} \Downarrow \sigma'$.
  \item If the derivation is a loop, then 
  $s = \ms{skip}$, $\s_1 = \ms{loop1}(e_1,s_1,e_2,s_2)\cons\s$,
  $\s_2 = \s$, and 
  $\tuple{\sigma}{(e_1,s_1,e_2,s_2)}\Downarrow \sigma'$.
  \end{itemize}
\end{lemma}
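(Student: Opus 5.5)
We prove both items simultaneously by induction on the length of the trace of the derivation. The paper has already observed that every balanced or loop trace is generated by the grammar for $\mathit{bal}$ and $\mathit{loop}$. So we may equivalently proceed by structural induction on the parse of the trace according to that grammar, with nonterminal $\mathit{bal}$ for balanced derivations and $\mathit{loop}$ for loop derivations. The sub-traces occurring in a production are strictly shorter, which makes the induction well founded.

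The first step is a small decomposition fact. If a balanced derivation from $\config{\sigma,s,\s}$ begins with a rule that pushes a frame $f$, the stack must later return to $\s$. The first time it does so, the step taken pops $f$. The portion strictly in between is then a balanced derivation relative to $f\cons\s$, and it ends in a configuration with control $\ms{skip}$, because every popping rule requires $\ms{skip}$ in the control. This is exactly what the grammar encodes, and it justifies applying the induction hypothesis to the inner segments.

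For the balanced case we argue by the first rule applied.
\begin{itemize}
\item With an empty trace, $s=\ms{skip}$ and rule \textsc{Skip} gives the claim.
\item With \textsc{AssVarS} or \textsc{AssArrS}, the corresponding big-step rule applies directly.
\item With \textsc{CallS} or \textsc{UnCallS}, the rest is balanced from $\config{\sigma,\Gamma(id),\s}$, respectively from $\config{\sigma,\mathcal{I}\llbracket\Gamma(id)\rrbracket,\s}$. The induction hypothesis followed by \textsc{Call} or \textsc{UnCall} gives the claim.
\item With \textsc{Seq1}, we split at the matching \textsc{Seq2} into two balanced segments, one for $s_1$ and one for $s_2$, and conclude with \textsc{Seq}.
\item With \textsc{IfTrue1} or \textsc{IfFalse1}, the side conditions of the two steps supply the test premise and the assertion premise, and the inner balanced segment supplies $\tuple{\sigma}{s_1}\Downarrow\sigma'$ (respectively $s_2$). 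We conclude with \textsc{IfTrue} or \textsc{IfFalse}.
\item With \textsc{LoopMainS}, we get $\tuple{\sigma}{e_1}\Downarrow v_1$ with $v_1$ true. The next segment is balanced on top of $\ms{loop1}(\ldots)\cons\s$ and yields $\tuple{\sigma}{s_1}\Downarrow\sigma''$. The remaining segment is a loop derivation, so the second item of the induction hypothesis gives the judgement for $(e_1,s_1,e_2,s_2)$, and \textsc{LoopMain} concludes.
\end{itemize}
In every case the final stack equals the initial one, and $s$ is proper, since no rule has a non-proper control.

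For the loop case we again argue by the first rule applied.
\begin{itemize}
\item With \textsc{LoopBaseS}, the side condition gives $e_2$ true, and \textsc{LoopBase} applies.
\item With \textsc{Loop1}, we obtain $e_2$ false. Then follows a balanced segment for $s_2$ over $\ms{loop2}(\ldots)\cons\s$, a \textsc{Loop2} step giving $e_1$ false, a balanced segment for $s_1$ over $\ms{loop1}(\ldots)\cons\s$, and a shorter loop derivation. The induction hypotheses together with \textsc{LoopRec} give the result.
\end{itemize}

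The main obstacle is the decomposition step: showing rigorously that the first return to the base stack is by the matching pop, and that the pieces in between are again balanced, or loop, derivations. This requires inspecting each rule to confirm that it changes only the top frame, or pushes or pops exactly one frame. For the loop case we must also check that the stack never passes through $\s$ before the final \textsc{LoopBaseS}; the definition of loop derivation guarantees this. Once these checks are in place, the remaining cases are routine matches between the small-step and big-step rules.
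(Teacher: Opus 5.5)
Your proposal is correct and follows the paper's proof. Both argue by induction on derivation length along the $\mathit{bal}$/$\mathit{loop}$ trace grammar, with the same case analysis and the same matching of small-step side conditions to the big-step premises. Your one addition is the decomposition argument justifying the grammar, which the paper simply asserts. As stated, that argument is slightly too strong for the frame pushed by \textsc{LoopMainS}, because \textsc{Loop1}/\textsc{Loop2} rewrite that frame in place. You nevertheless handle that case correctly by splitting it into a balanced segment followed by a loop derivation.
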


\begin{proof}
  We prove the claim by induction on length of the derivation,
  following the structure of balanced and loop
  traces, according to the grammar above:
  \begin{description}
  \item[\rm ($\epsilon$)] Then, $s$ must be $\ms{skip}$
  and the proof follows by applying rule \textsc{Skip} with
  $\sigma = \sigma'$.
  
  \item[\rm (\textsc{AssVarS})] Then, we have 
  $\config{\sigma,x~\oplus=e,\s} \rightarrow \config{\sigma[x \mapsto \llbracket \oplus \rrbracket(\sigma(x),v)],\ms{skip},\s}$, 
  with $\tuple{\sigma}{e} \Downarrow v$. Hence, the proof
  $\tuple{\sigma}{x~\oplus=e} \Downarrow 
  \sigma[x \mapsto \llbracket \oplus \rrbracket(\sigma(x),v)]$
  follows by applying rule \textsc{AssVar}.
  
  \item[\rm (\textsc{AssArrS})] This case is perfectly analogous
  to the previous one by applying rule \textsc{AssArr} instead.
  
  \item[\rm (\textsc{CallS}~\textit{bal})] Then, 
  $\config{\sigma,\ms{call}~id,\s}
  \rightarrow \config{\sigma,\Gamma(id),\s}
  \to^\ast \config{\sigma',\ms{skip},\s}$. 
  Now, since $\config{\sigma,\Gamma(id),\s} \to^\ast
  \config{\sigma',\ms{skip},\s}$ is balanced, by the ind.\
  hypothesis, we have
  $\tuple{\sigma}{\Gamma(id)} \Downarrow \sigma$.
  The claim thus follows by
  applying rule \textsc{Call} on 
  $\tuple{\sigma}{\ms{call}~\mathit{id}}$.
  
  \item[\rm (\textsc{UnCallS})] This case is perfectly analogous
  to the previous one by applying rule \textsc{UnCall} instead.

  \item[\rm (\textsc{Seq1}~\textit{bal}~\textsc{Seq2}~\textit{bal})]
  Then, we have a derivation of the following form:
  \[
  \begin{array}{lll}
  \config{\sigma,s_1~s_2,\s} &  \to & 
  \overbrace{\config{\sigma,s_1,\ms{seq}(s_2)\cons\s} \to^\ast \config{\sigma',\ms{skip},\ms{seq}(s_2)\cons\s}}^{(1)}\\
  & \to & \underbrace{\config{\sigma',s_2,\s}
  \to^\ast \config{\sigma'',\ms{skip},\s}}_{(2)}\\
  \end{array}
  \]
  Since both (1) and (2) above are balanced, by the inductive
  hypothesis, we have $\tuple{\sigma}{s_1}\Downarrow \sigma'$
  and $\tuple{\sigma'}{s_2}\Downarrow \sigma''$.
  Therefore, the proof follows by applying rule \textsc{Seq}
  on $\tuple{\sigma}{s_1~s_2}$.
  
  \item[\rm (\textsc{IfTrue1}~\textit{bal}~\textsc{IfTrue2})]
  Then, we have a derivation of the following form:
  \[
  \begin{array}{lll}
  \config{\sigma,\ms{if}~ e_1 ~\ms{then}~ s_1 ~\ms{else}~s_2~\ms{fi} ~e_2,\s} \\ 
  \to  \underbrace{\config{\sigma,s_1,\ms{if\_true}(e_2)\cons\s} 
   \to^\ast  \config{\sigma',\ms{skip},\ms{if\_true}(e_2)\cons\s}}_{(1)} 
  \to  \config{\sigma',\ms{skip},\s}
  \end{array}
  \]
  with $\tuple{\sigma}{e_1} \Downarrow	 v_1$, 
  $\ms{is\_true?}(v_1)$, 
  $\tuple{\sigma'}{e_2} \Downarrow v_2$, and 
  $\ms{is\_true?}(v_2)$. Since (1) above is balanced, by the
  inductive hypothesis, we have $\tuple{\sigma}{s_1} \Downarrow
  \sigma'$. Therefore, the proof follows by applying rule
  \textsc{IfTrue} on
  $\tuple{\sigma}{\ms{if}~ e_1 ~\ms{then}~ s_1 ~\ms{else}~s_2~\ms{fi} ~e_2}$.
  
  \item[\rm (\textsc{IfFalse1}~\textit{bal}~\textsc{IfFalse2})] 
  This case is perfectly analogous
  to the previous one by applying rule \textsc{IfFalse} instead.
  
  \item[\rm (\textsc{LoopMainS}~\textit{bal}~\textit{loop})]
  Then, we have a balanced derivation as follows:
  \[
  \begin{array}{lll}
  \config{\sigma,\ms{from}~ e_1 ~\ms{do}~ s_1 ~\ms{loop} ~s_2~\ms{until} ~e_2,\s} \\
  \to \underbrace{\config{\sigma,s_1,\ms{loop1}(e_1,s_1,e_2,s_2)\cons\s}
  \to^\ast \config{\sigma',\ms{skip},\ms{loop1}(e_1,s_1,e_2,s_2)\cons\s}}_{(1)} \\
  \to^\ast \config{\sigma'',\ms{skip},\s}
  \end{array}
  \]
  with $\tuple{\sigma}{e_1} \Downarrow v_1$ and  
  $\ms{is\_true?}(v_1)$.
  Since (1) above is balanced, by the ind.\ hypothesis, 
  we have $\tuple{\sigma}{s_1}\Downarrow\sigma'$. Moreover,
  since 
  $\config{\sigma',\ms{skip},\ms{loop1}(e_1,s_1,e_2,s_2)\cons\s}
  \to^\ast \config{\sigma'',\ms{skip},\s}$ is a \emph{loop} derivation,
  by the inductive hypothesis, we have
  $\tuple{\sigma'}{(e_1,s_1,e_2,s_2)}\Downarrow\sigma''$.
  Finally, the proof follows by applying rule \textsc{LoopMain}
  on $\tuple{\sigma}{\ms{from}~ e_1 ~\ms{do}~ s_1 ~\ms{loop} ~s_2~\ms{until} ~e_2}$.

  \item[\rm (\textsc{LoopBaseS})] Then, we have a one-step
  loop derivation of the following form:
  \[
  \config{\sigma,\ms{skip},\ms{loop1}(e_1,s_1,e_2,s_2)\cons\s}
  \to \config{\sigma,\ms{skip},\s}
  \]
  with 
  $\tuple{\sigma}{e_2} \Downarrow v_2$ and $\ms{is\_true?}(v_2)$.
  Trivially, we have $\tuple{\sigma}{(e_1,s_1,e_2,s_2)}
  \Downarrow \sigma$ by rule \textsc{LoopBase}.
  
  \item[\rm (\textsc{Loop1}~\textit{bal}~\textsc{Loop2}~\textit{bal}~\textit{loop})]
  Then, we have a loop derivation as follows:
  \[
  \begin{array}{lll}
  \config{\sigma,\ms{skip},\ms{loop1}(e_1,s_1,e_2,s_2)\cons\s} \\
  \to 
  \underbrace{\config{\sigma,s_2,\ms{loop2}(e_1,s_1,e_2,s_2)\cons\s} 
   \to^\ast \config{\sigma',\ms{skip},\ms{loop2}(e_1,s_1,e_2,s_2)\cons\s}}_{(1)} \\
  \to \underbrace{\config{\sigma',s_1,\ms{loop1}(e_1,s_1,e_2,s_2)\cons\s} 
  \to^\ast  \config{\sigma'',\ms{skip},\ms{loop1}(e_1,s_1,e_2,s_2)\cons\s}}_{(2)} \\
  \to^\ast \config{\sigma''',\ms{skip},\s}
  \end{array}
  \]
  with $\tuple{\sigma}{e_2} \Downarrow	 v_2$, 
  $\ms{is\_false?}(v_2)$, 
  $\tuple{\sigma'}{e_1} \Downarrow v_1$, and 
  $\ms{is\_false?}(v_1)$.
  Since both (1) and (2) above are balanced, by the
  inductive hypothesis, we have
  $\tuple{\sigma}{s_2}\Downarrow \sigma'$
  and $\tuple{\sigma'}{s_1}\Downarrow \sigma''$. 
  Moreover, since   
  $\config{\sigma'',\ms{skip},\ms{loop1}(e_1,s_1,e_2,s_2)\cons\s}
  \to^\ast \config{\sigma''',\ms{skip},\s}$ is a loop derivation,
  we have $\tuple{\sigma''}{(e_1,s_1,e_2,s_2)}\Downarrow \sigma'''$.
  Finally, the proof follows by applying rule \textsc{LoopRec} 
  on $\tuple{\sigma}{(e_1,s_1,e_2,s_2)}$. \qed
  \end{description}
\end{proof}
Now, we can proceed with the main result of correctness for
the small-step semantics:

\begin{theorem}\label{thm:equiv}
$
\tuple{\id}{s} \Downarrow  \sigma \text{ iff } 
\config{\id,s,\nul} \to^\ast \config{\sigma,\ms{skip},\nul}
$
\end{theorem}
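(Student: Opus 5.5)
The two directions of Theorem~\ref{thm:equiv} follow directly from Lemma~\ref{lemma:completeness} and Lemma~\ref{lemma:soundness}. All the real work is already done in those lemmas; what remains is to check that their hypotheses hold for configurations whose stack is empty.

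For the ``only if'' direction, assume $\tuple{\id}{s} \Downarrow \sigma$. Here $s$ is the main body of a program, so it is a proper statement: non-proper tuples $(e_1,s_1,e_2,s_2)$ only arise inside derivations via rule \textsc{LoopMain}. Lemma~\ref{lemma:completeness}(a), instantiated with the arbitrary stack $\s$ taken to be $\nul$, then gives $\config{\id,s,\nul} \to^\ast \config{\sigma,\ms{skip},\nul}$ immediately.

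For the ``if'' direction, assume $\config{\id,s,\nul} \to^\ast \config{\sigma,\ms{skip},\nul}$. The first step is to observe that this derivation is balanced. Its initial and final stacks coincide (both are $\nul$), and every stack trivially equals or extends $\nul$. This is exactly the remark made after the definition of balanced derivations. The first item of Lemma~\ref{lemma:soundness} then yields $\tuple{\id}{s} \Downarrow \sigma$, taking $\s_1=\s_2=\nul$.

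The only point needing care is the link between derivations and the trace grammar that Lemma~\ref{lemma:soundness} relies on: every balanced derivation must have a trace generated by $\mathit{bal}$. I would justify this by induction on the length of the derivation, using the invariant that a balanced derivation never pops below its initial stack. Each push rule (\textsc{Seq1}, \textsc{IfTrue1}, \textsc{IfFalse1}, \textsc{LoopMainS}) must eventually be matched by its corresponding pop rule, and the derivation can be split at the first point where the stack returns to its initial height. Because that lemma is given earlier, I can rely on it here, so the theorem itself has no substantial obstacle.
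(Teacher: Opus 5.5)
Your proposal is correct and follows the paper's own proof: the ``only if'' direction is Lemma~\ref{lemma:completeness}(a) with $\s=\nul$, and the ``if'' direction is the first item of Lemma~\ref{lemma:soundness}, applied after observing that any derivation from and to the empty stack is balanced. Your sketch of why balanced traces are generated by the $\mathit{bal}$ grammar goes beyond the paper, which simply asserts that characterization before Lemma~\ref{lemma:soundness}. It is a sound addition but not needed for the theorem itself.
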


\begin{proof}
  The ``only if'' part follows from Lemma~\ref{lemma:completeness}.
  The ``if'' part follows from Lemma~\ref{lemma:soundness} and
  the fact that any derivation of the form
  $\config{\id,s,\nul} \to^\ast \config{\sigma,\ms{skip},\nul}$
  is balanced. \qed
\end{proof}

\end{journal}


\section{A Reversible Semantics for Janus} \label{sec:reversible}

In this section, we propose a new reversible (small-step)
semantics based on the use of a \emph{program counter}.
Here, the configurations do not include a piece of code 
to be executed, as in the small-step semantics of the 
previous section, but a label that points to the next 
``block'' of the program to be executed.
This greatly simplifies the definition of a 
reversible semantics for the language.

Typically, semantics based on a program counter have been restricted
to simple, assembly-like languages, where execution is modeled by
simply incrementing a counter or updating it to perform jumps.  In
contrast, high-level language definitions rarely use an explicit
program counter, favoring instead the use of evaluation
contexts to manage control flow implicitly.  We refer to
related work (Section~\ref{sec:related}) for a more detailed
discussion on this topic.

\begin{figure}[t]
\begin{center}
$
\begin{array}{rcl@{~~~~~~}rcl@{ }}
\mi{p} & ::= & s ~(\ms{procedure} ~id~s)^{+} & \\
\mi{s} & ::= & \red{[}x \oplus = e\red{]^\ell} \, \mid \, 
\red{[}x[e]\oplus=e\red{]^\ell} \,
\mid 
\ms{if}~\red{[}e\red{]^\ell} ~\ms{then} ~s~\ms{else}~s~\ms{fi}~
\red{[}e\red{]^\ell} \\
& \mid & 
\ms{from}~\red{[}e\red{]^\ell}~\ms{do}~s~\ms{loop}~s~\ms{until}~\red{[}e\red{]^\ell} \,
\mid 
\red{[} \ms{call}~id\red{]^\ell} \, \mid \, 
\red{[} \ms{uncall}~id\red{]^\ell} \mid \, 
\red{[} \ms{skip}\red{]^\ell} \, \mid \, s~s\\
& \mid & \red{[\ms{start}]^\ell} \, \mid \, \red{[\ms{stop}]^\ell} 
\end{array}
$
\end{center}
\caption{Extended language syntax ($\mi{e}, \mi{c}, \oplus$ and $\odot$ are as in Fig.~\ref{syntax})} \label{fig:labeled-syntax}
\end{figure}

In the following, we introduce a control-flow graph (CFG)
to manage the evolution of the program counter. To this end, 
Fig.~\ref{fig:labeled-syntax} introduces an extended 
syntax where \emph{elementary blocks} (conditions, assignments, 
call, uncall, and $\ms{skip}$) are now labeled. 
We consider a domain of labels $\cL$, where 
$\ell,\ell_1,\ldots$ range over the set $\cL$.
All labels are unique in a program.
We let $\block(\ell) = o$
if $[o]^\ell$ belongs to the considered program,
where $o$ is either a statement or a condition.
\begin{conference}
\begin{figure}[t]
    \centering
    \fbox{
    \begin{minipage}{0.95\textwidth}
        \ttfamily 
        \begin{minipage}[t]{0.28\textwidth}
            [\kw{start}]$^1$ \\ \protect
            [n += 2]$^2$ \\
            {[\kw{call} sumMul2]}$^3$ \\
            {[\kw{stop}]}$^4$
        \end{minipage}
        \hfill 
        \begin{minipage}[t]{0.68\textwidth}
            \kw{procedure} sumMul2 \\
            \hspace*{1em} [\kw{start}]$^5$ \\
            \hspace*{1em} [i += 1]$^6$ \\
            \hspace*{1em} \kw{from} [i == 1]$^7$ \kw{do} \\
            \hspace*{2em} \kw{if} [(i \% 2) == 0]$^8$ \kw{then} [total += i]$^9$
            \hspace*{11.5em} \kw{else} [skip]$^{10}$ \\
            \hspace*{2em} \kw{fi} [(i \% 2) == 0]$^{11}$ \\
            \hspace*{1em} \kw{loop} \\
            \hspace*{2em} [i += 1]$^{12}$ \\
            \hspace*{1em} \kw{until} [i >= n]$^{13}$ \\
            \hspace*{1em} [n += total]$^{14}$\\
            \hspace*{1em} [\kw{stop}]$^{15}$\\
        \end{minipage}
    \end{minipage}
    }
    \caption{Sum2: a program computing the sum of the multiples of 2 up to $\tt n$}
    \label{fig:program}
\end{figure}
\end{conference}
\begin{journal}
\begin{figure}[t]
    \centering
    \fbox{
    \begin{minipage}{0.95\textwidth}
        \ttfamily 
        \begin{minipage}[t]{0.28\textwidth}
            [\kw{start}]$^1$ \\ \protect
            [n += 3]$^2$ \\
            {[\kw{call} sumMul3]}$^3$ \\
            {[\kw{stop}]}$^4$
        \end{minipage}
        \hfill 
        \begin{minipage}[t]{0.68\textwidth}
            \kw{procedure} sumMul3 \\
            \hspace*{1em} [\kw{start}]$^5$ \\
            \hspace*{1em} [i += 1]$^6$ \\
            \hspace*{1em} \kw{from} [i == 1]$^7$ \kw{do} \\
            \hspace*{2em} \kw{if} [(i \% 3) == 0]$^8$ \kw{then} [total += i]$^9$
            \hspace*{11.5em} \kw{else} [skip]$^{10}$ \\
            \hspace*{2em} \kw{fi} [(i \% 3) == 0]$^{11}$ \\
            \hspace*{1em} \kw{loop} \\
            \hspace*{2em} [i += 1]$^{12}$ \\
            \hspace*{1em} \kw{until} [i >= n]$^{13}$ \\
            \hspace*{1em} [n += total]$^{14}$\\
            \hspace*{1em} [\kw{stop}]$^{15}$\\
        \end{minipage}
    \end{minipage}
    }
    \caption{Sum3: a program computing the sum of the multiples of 3 up to $\tt n$}
    \label{fig:program}
\end{figure}
\end{journal}
Furthermore, we assume all programs (and procedures) have an
initial statement $\ms{start}$ and a 
final statement $\ms{stop}$ that simply mark the beginning 
and end (and is semantically equivalent to $\ms{skip}$).
\begin{conference}
A labeled Janus program, Sum2, is shown in Figure~\ref{fig:program}.
\end{conference}
\begin{journal}
A labeled Janus program, Sum3, is shown in Figure~\ref{fig:program}.
\end{journal}

\begin{figure}[t]
$
\begin{array}{rcl@{~~~~~~}rcl}
\entry(\ms{if}~[e_1]^{\ell_1} ~\ms{then} ~s_1~\ms{else}~s_2~\ms{fi}~
[e_2]^{\ell_2}) & = & \ell_1 & \entry(s_1~s_2)  & = &  \entry(s_1)\\
\entry(\ms{from}~[e_1]^{\ell_1}~\ms{do}~s_1~\ms{loop}~s_2~\ms{until}~[e_2]^{\ell_2}) & = & \ell_1 &
\entry([s]^\ell) & = & \ell ~\textrm{ otherwise}\\[1ex]
\exit(\ms{if}~[e_1]^{\ell_1} ~\ms{then} ~s_1~
\ms{else}~s_2~\ms{fi}~[e_2]^{\ell_2}) & = & \ell_2 &
\exit(s_1~s_2) & = & \exit(s_2) \\
\exit(\ms{from}~[e_1]^{\ell_1}~\ms{do}~s_1~
\ms{loop}~s_2~\ms{until}~[e_2]^{\ell_2}) & = & \ell_2 &
\exit([s]^\ell) & = & \ell ~\textrm{ otherwise}
\end{array}
$
\caption{Entry and exit points}
\label{fig:entryexit}
\end{figure}

Now, Figure~\ref{fig:entryexit} 
introduces the auxiliary function $\entry$
(resp.\ $\exit$), which returns the entry (resp.\ exit)
label of a program statement. Note that, in contrast
to most (high level) programming languages, $\exit$ always returns
a single label because of Janus conditionals with a single
exit point. The definitions are self-explanatory.
We also consider the following 
(partially defined) function,
$\context$, that identifies the statement where 
the label of a given expression (a test or assertion) occurs.
E.g., if the sentence 
``$\ms{if}~[e_1]^{\ell_1} ~\ms{then} ~s_1~\ms{else}~s_2~\ms{fi}~
[e_2]^{\ell_2}$'' occurs in a program, then
$\context(\ell_1) = \context(\ell_2) =
(\ms{if}~[e_1]^{\ell_1} ~\ms{then} ~s_1~\ms{else}~s_2~\ms{fi}~
[e_2]^{\ell_2})$.
Analogously, if the sentence 
``$\ms{from}~[e_1]^{\ell_1}~\ms{do}~s_1~\ms{loop}~s_2~\ms{until}~[e_2]^{\ell_2}$''
occurs in the program, then
$\context(\ell_1) = \context(\ell_2) =
(\ms{from}~[e_1]^{\ell_1}~\ms{do}~s_1~\ms{loop}~s_2~\ms{until}~[e_2]^{\ell_2})$.

\begin{figure}[t]
$
\begin{array}{rcl}
\flow(\ms{if}~[e_1]^{\ell_1} ~\ms{then} ~s_1~\ms{else}~s_2~\ms{fi}~
[e_2]^{\ell_2}) & = & \{(\ell_1,\entry(s_1)),(\exit(s_1),\ell_2)\}\\
& & \cup~ \{(\ell_1,\entry(s_2),(\exit(s_2),\ell_2)\}\\
& & \cup~ \flow(s_1)\cup\flow(s_2)\\
\flow(\ms{from}~[e_1]^{\ell_1}~\ms{do}~s_1~\ms{loop}~s_2~\ms{until}~[e_2]^{\ell_2}) & = & 
\{(\ell_1,\entry(s_1)),(\exit(s_1),\ell_2)\}\\
&  & \cup~ \{(\ell_2,\entry(s_2)),(\exit(s_2),\ell_1)\}\\
&  & 
\cup~\flow(s_1)\cup\flow(s_2)\\
\flow(s_1~s_2) & = & \{(\exit(s_1),\entry(s_2)\}  \cup  \flow(s_1) \cup\flow(s_2)\\
\flow(s) & = & \emptyset \quad \textrm{otherwise}
\end{array}
$
\caption{Control-flow graph of a Janus program} \label{fig:cfg}
\end{figure}

Then, function $\flow$ (Fig.~\ref{fig:cfg}) computes
the edges of the CFG of a Janus program.
It is worth noting that we do not have edges for
call/return (resp.\ uncall/return) in a CFG. 
The control flow for
these cases will be dealt with dynamically in
the corresponding transition rules.
We represent an edge $(\ell_1,\ell_2)$ graphically as 
$\ell_1 \edge \ell_2$.
Now, we proceed as follows:
\begin{itemize}
\item given a program, 
for each procedure ``$\ms{procedure}~\mathit{id}~s$'',
we introduce an inverse procedure of the form
``$\ms{procedure}~\mathit{id}^{-1}~\mathcal{I}\llbracket 
s \rrbracket$'', where the definition of function 
$\mathcal{I}\llbracket ~ \rrbracket$ from Fig.~\ref{fig:inverter} is extended so that
$\mathcal{I}\llbracket \ms{start} \rrbracket  ::=  \ms{stop}$
and $\mathcal{I}\llbracket \ms{stop} \rrbracket  ::=  \ms{start}$;
\item then, we add labels to each elementary block;
\item finally, we compute the CFGs of the main program
and of each procedure as well as its inverse version.
\end{itemize}
For example,
\begin{conference}
for program Sum2 (Fig.~\ref{fig:program}),
\end{conference}
\begin{journal}
for program Sum3 (Fig.~\ref{fig:program}),
\end{journal}
function $\mathit{flow}$ computes the following two 
CFGs (the graphical representation is shown in 
Fig.~\ref{fig:cfg-ex}):%
\begin{conference}
\footnote{For simplicity,
we do not show the CFG of $\mathit{sumMul2}^{-1}$ since 
there is no $\ms{uncall}$ in the program and, thus, 
it will not be used.}
\end{conference}
\begin{journal}
\footnote{For simplicity,
we do not show the CFG of $\mathit{sumMul3}^{-1}$ since 
there is no $\ms{uncall}$ in the program and, thus, 
it will not be used.}
\end{journal}
\[
\begin{array}{rl}
CFG_{\mathit{main}}  = \{ & (1,2), (2,3),
(3, 4) ~\}\\[1ex]
CFG_{\mathit{sumMul3}} = \{ & 
(5,6), 
(6,7), 
(7,8), 
(8,9), (8,10), 
(9,11), (10,11),\\
& (11, 13), 
(13, 12), (12,7), 
(13,14), (14,15)~\}
\end{array}
\]

\begin{figure}[t]
\centering
\begin{tikzpicture}[
    node distance=0.5cm and 0.5cm,
    state/.style={
        circle, 
        draw, 
        minimum size=0.5cm,
        inner sep=0pt,
        font=\footnotesize
    },
    true/.style={
        draw=black, 
        thin, 
        solid, 
        -Latex
    },
    label_text/.style={
        font=\itshape\footnotesize, 
        anchor=east
    }
]

    \node[state] (1) at (0,0) {1};
    \node[state] (2) [right=of 1] {2};
    \node[state] (3) [right=of 2] {3};
    \node[state] (4) [right=of 3] {4};
    \node[label_text] at ([xshift=-0.2cm]1.west) {Main program};
    \draw[true] (1) -- (2);
    \draw[true] (2) -- (3);
    \draw[true] (3) -- (4);

    \node[state] (5) [below=0.8cm of 1] {5}; 
    \node[state] (6) [right=of 5] {6};
    \node[state] (7) [right=of 6] {7}; 
    \node[state] (8) [right=of 7] {8}; 
    \node[label_text] at ([xshift=-0.2cm]5.west) {Procedure $\mathit{sumMul3}$};
    \node[state] (9) [above right=0.1cm and 0.5cm of 8] {9};
    \node[state] (10) [below right=0.1cm and 0.5cm of 8] {10};
    \node[state] (11) [right=1.8cm of 8] {11};
    \node[state] (13) [right=of 11] {13};
    \node[state] (12) [below=0.4cm of 13] {12};
    \node[state] (14) [right=of 13] {14};
    \node[state] (15) [right=of 14] {15};
    \draw[true] (5) -- (6);
    \draw[true] (6) -- (7);
    \draw[true] (7) -- (8);
    \draw[true] (8) -- (9);
    \draw[true] (8) -- (10);
    \draw[true] (9) -- (11);
    \draw[true] (10) -- (11);
    \draw[true] (11) -- (13);
    \draw[true] (13) -- (14);
    \draw[true] (14) -- (15);
    \draw[true] (13) -- (12);
    \draw[true] (12) -| (7);
\end{tikzpicture}
\caption{CFGs for program Sum3} \label{fig:cfg-ex}
\end{figure}

\subsection{The Forward Rules}

\begin{figure}[tp]
\begin{mathpar}
\inferrule*[leftstyle=\rm,left=$\overrightarrow{\mathit{AssVar}}$] 
{\block(\ell') = (x~\oplus=e) \\  
\tuple{\sigma}{e} \Downarrow v \\
\ell' \edge \ell'' } 
{ \config{\sigma,\ell,\ell',\s} \rh \config{\sigma[x \mapsto \llbracket \oplus \rrbracket(\sigma(x),v)],\ell',\ell'',\s} }     
\and
\inferrule*[leftstyle=\rm,left=$\overrightarrow{\mathit{AssArr}}$] 
{ 
\block(\ell') = (x[e_l]~\oplus=e) \\
\tuple{\sigma}{ e_l} \Downarrow	 v_l \\
\tuple{\sigma}{ e} \Downarrow	 v\\
\ell' \edge \ell'' 
} 
{
\config{\sigma,\ell,\ell',\s} \rh 
\config{\sigma[x[v_l] \mapsto \llbracket \oplus \rrbracket(\sigma(x[v_l]),v)],\ell',\ell'',\s} 
}
\and
\inferrule*[leftstyle=\rm,left=$\overrightarrow{\mathit{Call}}$]
{\block(\ell') = (\ms{call}~id)\\
 \entry(\Gamma(id)) \edge \ell''           
}           
{ \config{\sigma,\ell,\ell',\s} \rh  \config{\sigma,\entry(\Gamma(id)),\ell'',\ms{call}(\ell')\cons\s}  }
\and
\inferrule*[leftstyle=\rm,left=$\overrightarrow{\mathit{Return1}}$] 
{\block(\ell') = \ms{stop}\\
\ell'' \edge \ell'''
}
{ \config{\sigma,\ell,\ell',\ms{call}(\ell'')\cons\s} 
\rh  \config{\sigma,\ell'',\ell''',\s}  }
\and
\inferrule*[leftstyle=\rm,left=$\overrightarrow{\mathit{UnCall}}$]
{\block(\ell') = (\ms{uncall}~id)\\
 \entry(\Gamma(id^{-1})) \edge \ell''           
}           
{ \config{\sigma,\ell,\ell',\s} \rh  \config{\sigma,\entry(\Gamma(id^{-1})),\ell'',\ms{uncall}(\ell')\cons\s}  }
\and
\inferrule*[leftstyle=\rm,left=$\overrightarrow{\mathit{Return2}}$] 
{\block(\ell') = \ms{stop}\\
\ell'' \edge \ell'''
}
{ \config{\sigma,\ell,\ell',\ms{uncall}(\ell'')\cons\s} 
\rh  \config{\sigma,\ell'',\ell''',\s}  }
\and
\inferrule*[leftstyle=\rm,left=$\overrightarrow{\mathit{Skip}}$] 
{
\block(\ell') = \ms{skip}\\
\ell' \edge \ell''
}
{ 
\config{\sigma,\ell,\ell',\s} \rh  \config{\sigma,\ell',\ell'',\s}  
}
\end{mathpar}
\caption{Reversible forward semantics: rules for basic constructs} \label{fig:reversible-smallstep}
\end{figure}

Now, we introduce the transition relation, $\rh$,
defining the 
forward direction of the reversible semantics.
Here, configurations have the form 
$\config{\sigma,\ell,\ell',\s}$, where the control of
the configuration is determined by a pair of labels 
representing the \emph{last executed statement} and the 
\emph{next statement to be executed}, respectively. 
Actually, the transition rules of the forward semantics 
only use the latter. 
We include the former 
in the configurations since it will be used by the transition
rules of the backward semantics, and in this way we have 
a single definition of configuration that works for both 
directions.
The rules defining the basic constructs can be found in
Fig.~\ref{fig:reversible-smallstep}. These rules are
quite similar to their counterpart in the small-step
semantics of Fig.~\ref{fig:smallstep}. The main differences
are the following:
\begin{itemize}
\item First, configurations no longer contain the next 
statement to be executed, but rather a label that 
identifies it (acting as a \emph{program counter}). 
Thus, we use the auxiliary function $\block$ to identify 
the block pointed to by that label. 
Furthermore, in many cases, the rules increment the 
program counter according to the edges of the CFG.

\item Rule $\overrightarrow{\mathit{Call}}$ updates the 
control of the configuration with the labels from the 
first two statements of the procedure. Specifically, 
the first label will point to the statement $\ms{start}$
(which is not executed), so the second label will 
point to the actual first statement of the procedure.
Rule $\overrightarrow{Return1}$, on the other hand, 
is triggered when the next statement to be executed 
is $\ms{stop}$. In this case, we update the control of
the configuration so that the last executed statement 
is $\ms{call}$, and the next statement to be executed
is the one following it (according to the CFG).

\item Rules $\overrightarrow{\mathit{UnCall}}$ and 
$\overrightarrow{\mathit{Return2}}$ proceed analogously, 
with the only difference being that, in this case, 
we consider the inverse procedure.

\item Finally, we introduce a new rule, 
$\overrightarrow{\mathit{Skip}}$, to deal with
$\ms{skip}$ which just increments the program counter 
(according to the CFG).
\end{itemize}
\begin{figure}[tp]
\begin{mathpar}
\inferrule*[leftstyle=\rm,left=$\overrightarrow{\mathit{IfTrue1}}$] 
{ 
\block(\ell_1) = e_1\\ 
\context(\ell_1) = (\ms{if}~ [e_1]^{\ell_1} ~\ms{then}~ s_1 ~\ms{else}~s_2~\ms{fi} ~[e_2]^{\ell_2})
\\
\tuple{\sigma}{e_1} \Downarrow	 v_1 \\ 
\ms{is\_true?}(v_1) 
} 
{ \config{\sigma,\ell,\ell_1,\s} \rh  \config{\sigma,\ell_1,\entry(s_1),
    \ms{if\_true}(\ell_1,\ell_2)\cons\s}  
}
\and 
\inferrule*[leftstyle=\rm,left=$\overrightarrow{\mathit{IfTrue2}}$] 
{ 
\block(\ell_2) = e_2\\
\tuple{\sigma}{e_2} \Downarrow	 v_2 \\ 
\ms{is\_true?}(v_2) \\
\ell_2 \edge \ell'_2
}
{ \config{\sigma,\ell,\ell_2,\ms{if\_true}(\ell_1,\ell_2)\cons\s}
\rh 
\config{\sigma,\ell_2,\ell'_2,\s}
}
\and 
\inferrule*[leftstyle=\rm,left=$\overrightarrow{\mathit{IfFalse1}}$] 
{ 
\block(\ell_1) = e_1\\ 
\context(\ell_1) = (\ms{if}~ [e_1]^{\ell_1} ~\ms{then}~ s_1 ~\ms{else}~s_2~\ms{fi} ~[e_2]^{\ell_2})
\\
\tuple{\sigma}{e_1} \Downarrow	 v_1 \\ 
\ms{is\_false?}(v_1) 
} 
{ \config{\sigma,\ell,\ell_1,\s} \rh  \config{\sigma,\ell_1,\entry(s_2),\ms{if\_false}(\ell_1,\ell_2)\cons\s}  
}
\and 
\inferrule*[leftstyle=\rm,left=$\overrightarrow{\mathit{IfFalse2}}$] 
{ 
\block(\ell_2) = e_2\\
\tuple{\sigma}{e_2} \Downarrow	 v_2 \\ 
\ms{is\_false?}(v_2) \\
\ell_2 \edge \ell'_2
}
{ \config{\sigma,\ell,\ell_2,\ms{if\_false}(\ell_1,\ell_2)\cons\s}  
\rh 
\config{\sigma,\ell_2,\ell'_2,\s}
}
\and
\inferrule*[leftstyle=\rm,left=$\overrightarrow{\mathit{LoopMain}}$] 
{ 
\block(\ell_1) = e_1\\
\context(\ell_1) =
(\ms{from}~ [e_1]^{\ell_1} ~\ms{do}~ s_1 ~\ms{loop}~s_2~\ms{until} ~[e_2]^{\ell_2})\\
\tuple{\sigma}{e_1} \Downarrow	 v_1\\ 
\ms{is\_true?}(v_1) 
} 
{
\config{\sigma,\ell,\ell_1,\s}
\rh  
\config{\sigma,\ell_1,\entry(s_1),\ms{loop1}(\ell_1,s_1,\ell_2,s_2)\cons\s}  
}
\and
\inferrule*[leftstyle=\rm,left=$\overrightarrow{\mathit{LoopBase}}$] 
{ 
\block(\ell_2) = e_2 \\
\tuple{\sigma}{e_2} \Downarrow	 v_2\\ 
\ms{is\_true?}(v_2) \\
\ell_2 \edge \ell'_2 \\
\ell'_2 \neq \entry(s_1)
} 
{
\config{\sigma,\ell,\ell_2,\ms{loop1}(\ell_1,s_1,\ell_2,s_2)\cons\s}
\rh
\config{\sigma,\ell_2,\ell'_2,\s}  
}
\and
\inferrule*[leftstyle=\rm,left=$\overrightarrow{\mathit{Loop1}}$] 
{ 
\block(\ell_2) = e_2 \\
\tuple{\sigma}{e_2} \Downarrow	 v_2\\ 
\ms{is\_false?}(v_2) 
} 
{
\config{\sigma,\ell,\ell_2,\ms{loop1}(\ell_1,s_1,\ell_2,s_2)\cons\s}  
\rh
\config{\sigma,\ell_2,\entry(s_2),\ms{loop2}(\ell_1,s_1,\ell_2,s_2)\cons\s}
}
\and
\inferrule*[leftstyle=\rm,left=$\overrightarrow{\mathit{Loop2}}$] 
{ 
\block(\ell_1) = e_1\\
\tuple{\sigma}{e_1} \Downarrow	 v_1\\ 
\ms{is\_false?}(v_1) 
} 
{
\config{\sigma,\ell,\ell_1,\ms{loop2}(\ell_1,s_1,\ell_2,s_2)\cons\s}  
\rh
\config{\sigma,\ell_1,\entry(s_1),\ms{loop1}(\ell_1,s_1,\ell_2,s_2)\cons\s} 
}
\end{mathpar}
\caption{Reversible forward semantics: rules for if and for loop} \label{fig:reversible-ifloop}
\end{figure}
The second group of rules, the rules for conditional and loop
statements, can be seen in Figure~\ref{fig:reversible-ifloop}.
These rules are perfectly analogous to their 
counterpart in the small-step semantics of Fig.~\ref{fig:ifloop}.
The only relevant difference lies in the use of a 
program counter to identify the next block to be executed. 
Note also that the use of the auxiliary function 
$\context$ is necessary to determine whether the Boolean 
expression is part of a conditional statement or of a loop.

On the other hand, observe that the stack elements $\ms{if\_true}$ and 
$\ms{if\_false}$ contain labels for both the test and 
the assertion of the conditional. In practice, only the 
second label is required (by either rule $\overrightarrow{\mathit{IfTrue2}}$ and $\overrightarrow{\mathit{IfFalse2}}$). 
However, the rules of the backward semantics will require 
the first label (that of the test). Hence, 
we include both labels in the elements of the stack 
so that they can be used by both the forward and 
the backward semantics.

Also, when the program counter points to the test of the loop ($\ell_2$), 
we can either terminate the loop and proceed to the next 
statement or execute $s_2$ and return to the 
initial assertion ($\ell_1$). Consequently, $\ell_2$ has 
two outgoing edges in the CFG. Therefore, we add 
the condition $\ell'_2 \neq \entry(s_1)$ in rule 
$\overrightarrow{\mathit{LoopBase}}$ in order to 
ensure that we are considering the right edge.

Finally, the elements of the stack, $\ms{loop1}$ and 
$\ms{loop2}$, could in principle contain only four labels, 
$(\ell_1,\entry(s_1),\ell_2,\entry(s_2))$. 
Indeed, this would be enough for the forward semantics. 
However, in the backward semantics, we will need to access 
$\exit(s_1)$ and $\exit(s_2)$. Therefore, we
use $(\ell_1,s_1,\ell_2,s_2)$ as arguments 
of $\ms{loop1}$ and $\ms{loop2}$, so that they contain 
enough information for both semantics.

Given a program $p = (s~p_1~\ldots~p_n)$, an \emph{initial
configuration} has now the form 
$\config{\emptystate,\entry(s),\ell,\nul}$, where
$\entry(s) \edge \ell$ belongs to the CFG of the program.

Now, we will state the equivalence between 
the small-step semantics of the previous section
and the reversible (forward) semantics 
just introduced.
\begin{journal}
In what follows, we assume that program blocks are also 
labeled when applying the rules of small-step semantics
of the previous section, which simply ignore the annotations.
Furthermore, for simplicity, although we consider a single
program in our correctness result, we assume that
the additional statements $\ms{start}$ and $\ms{stop}$
only occur when considering the reversible semantics.
\end{journal}

\newcommand{\eqconf}{\sim_{\mathit{conf}}}
\newcommand{\eqstack}{\sim_{\mathit{stack}}}

\begin{lemma} \label{lemma:forward-rev}
  Let $s$ be a statement where $\entry(s) \edge \ell$ 
  belongs to the CFG.
  Then, $\config{\emptystate,s,\nul} \to^\ast 
  \config{\sigma,\ms{skip},\nul}$
  iff  $\config{\emptystate,\entry(s),\ell,\nul} 
  \rh^\ast 
  \config{\sigma,\ell',\ell'',\nul}$ with 
  $\block(\ell'') = \ms{stop}$.
\end{lemma}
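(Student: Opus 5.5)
I will prove the statement through a simulation relation between configurations of the two semantics, generalised so that it can be proved by induction. The lemma relates $\config{\emptystate,s,\nul}$ to $\config{\emptystate,\entry(s),\ell,\nul}$, where the start block sits at $\entry(s)$ and $\ell$ is the first real statement. A small-step configuration $\config{\sigma,s',\s}$ is matched by a reversible configuration $\config{\sigma,\ell_a,\ell_b,\s'}$ as follows. If $s'\neq\ms{skip}$, then $\ell_b=\entry(s')$. If $s'=\ms{skip}$ (a finished statement), then $\ell_b$ is the CFG successor of the exit of the statement just finished; this is the label that the matching stack frame (seq, if, loop, or the final stop) expects.

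The stacks are related by a relation $\eqstack$. It maps $\ms{seq}(s_2)$ to nothing, since the CFG edge $\exit(s_1)\edge\entry(s_2)$ replaces it. It maps $\ms{if\_true}(e_2)$ to $\ms{if\_true}(\ell_1,\ell_2)$, and similarly for $\ms{if\_false}$. It maps $\ms{loop}i(e_1,s_1,e_2,s_2)$ to $\ms{loop}i(\ell_1,s_1,\ell_2,s_2)$. Because the small-step rules \rules{CallS}/\rules{UnCallS} just inline $\Gamma(id)$, a small-step call carries no frame. The reversible semantics instead pushes $\ms{call}(\ell')$, so $\eqstack$ also inserts such a frame for every procedure body currently being executed inline.

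The core is a generalisation of Lemma~\ref{lemma:completeness}, proved by induction on balanced and loop traces using the grammar given before Lemma~\ref{lemma:soundness}. Suppose the balanced derivation $\config{\sigma,s,\s}\to^\ast\config{\sigma',\ms{skip},\s}$ holds, $s$ is labelled, and $\exit(s)\edge\ell_n$ (or the exit of $s$ is followed by the matching $\ms{stop}$ of the enclosing body). Then $\config{\sigma,\ell_p,\entry(s),\s'}\rh^\ast\config{\sigma',\exit(s),\ell_n,\s'}$ for any related stack $\s'$ and any $\ell_p$. The loop traces get the analogous statement. The cases run in parallel with Lemma~\ref{lemma:completeness}:
\begin{itemize}
\item Assignments and skip take one step.
\item Sequences need no rule in the reversible semantics; I just chain the two inductive hypotheses, using that $\exit(s_1)\edge\entry(s_2)$.
\item Conditionals use the rules 1 and 2 together with the flow edges $\exit(s_i)\edge\ell_2$.
\item Loops use the edges $\exit(s_1)\edge\ell_2$, $\ell_2\edge\entry(s_2)$ and $\exit(s_2)\edge\ell_1$. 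I must check the side condition $\ell'_2\neq\entry(s_1)$ in $\overrightarrow{\mathit{LoopBase}}$; I may need to read it as "$\ell_2'$ is not the loop-back edge", i.e.\ $\ell_2'\neq\entry(s_2)$.
\item For calls I apply $\overrightarrow{\mathit{Call}}$. The inductive hypothesis on $\Gamma(id)$ (with start/stop stripped) brings the counter to the body's $\ms{stop}$, and $\overrightarrow{\mathit{Return1}}$ then pops the frame and follows $\ell'\edge\ell'''$.
\end{itemize}
The ``only if'' direction then follows: the top-level balanced derivation ends with the counter on the main program's $\ms{stop}$, as the statement requires.

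For the ``if'' direction I prove the converse simulation. I induct on the length of the reversible derivation and show that reversible traces have the same bracketed structure: Call/Return1 and UnCall/Return2 nest like CallS, and if/loop frames push and pop in matching pairs. Each such bracketed segment then yields a balanced small-step derivation, and Theorem-style soundness is not needed, only the forward simulation. Determinism helps here. Both semantics are deterministic given the store: the conditions select a unique rule, and for labels with two outgoing edges the side conditions select a unique edge. So from a reversible run reaching stop with an empty stack, I can decompose the run by the stack discipline and reconstruct the matching small-step steps by the same case analysis.

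I expect the main obstacle to be this reverse direction, in particular the bookkeeping of the label pair. The next label after a finished construct is determined by the CFG rather than by the stack, so the invariant has to state precisely which edge is taken at exits. This is delicate at the loop's $\ell_2$, which has two successors, and at the end of procedure bodies, where the successor is determined dynamically through the $\ms{call}(\ell'')$ frame rather than by the CFG.
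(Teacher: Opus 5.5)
Your ``only if'' direction is correct and takes a different route from the paper. The paper proves a one-step simulation in each direction through $\sim_{\mathit{conf}}$. That relation requires the same store, stacks related by erasing $\ms{seq}$ frames and $\ms{call}$/$\ms{uncall}$ frames, and the next label equal to $\entry$ of the control, or to $\cont$ of the stack when the control is $\ms{skip}$. You instead induct on the balanced/loop trace grammar and prove a per-statement macro-step that tracks both labels. This buys something real. The paper's one-step case analysis never produces the $\overrightarrow{\mathit{Skip}}$ or $\overrightarrow{\mathit{Return1}}$ steps, and $\sim_{\mathit{conf}}$ fails while the reversible counter sits on the $\ms{stop}$ of a procedure body. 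Your induction inserts both steps explicitly. One refinement: when $\exit(s)$ is a loop test, state $\ell_n$ as its non-back-edge successor.

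Your reading of $\overrightarrow{\mathit{LoopBase}}$ is right, and the ``if'' direction needs it. $\entry(s_1)$ is never a successor of $\ell_2$, so the side condition as written is vacuous, and with it the ``if'' direction is false. Take the main body $[\ms{start}]^1~\ms{from}~[1]^2~\ms{do}~[\ms{skip}]^3~\ms{loop}~[x\,{+}{=}\,1]^4~\ms{until}~[1]^5~[\ms{stop}]^6$. The reversible run may take $\overrightarrow{\mathit{LoopBase}}$ from $5$ to $4$, popping the frame. It then increments $x$, re-enters by $\overrightarrow{\mathit{LoopMain}}$, and exits to $6$ with $x\mapsto 1$ and an empty stack. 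The small-step run only reaches $\config{\emptystore,\ms{skip},\nul}$. For ``only if'' the rule as written is harmless, since you may pick the exit edge.

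The genuine gap is in your ``if'' direction. Even with the fix, the two semantics are not in lockstep, so ``reconstruct the matching small-step steps by the same case analysis'' fails at one configuration: $\config{\sigma,\exit(s_2),\ell_1,\ms{loop2}(\ell_1,s_1,\ell_2,s_2)\cons\s'}$ with $e_1$ true. $\overrightarrow{\mathit{LoopMain}}$ has no stack premise, so it fires and pushes $\ms{loop1}$ above the $\ms{loop2}$ frame. The related small-step configuration $\config{\sigma,\ms{skip},\ms{loop2}(\ldots)\cons\s}$ is stuck, because \rules{Loop2} needs $e_1$ false (Janus' re-entry check). Determinism of each semantics separately does not help. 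Your bracketing argument does not exclude this either: frames still push and pop in matching pairs, since the new $\ms{loop1}$ is later popped by $\overrightarrow{\mathit{LoopBase}}$. The resulting segment is a complete loop run with no small-step counterpart at that point.

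What you must add is an argument that a run containing such a step never reaches $\ms{stop}$ with an empty stack. The stale $\ms{loop2}$ frame can only be removed by $\overrightarrow{\mathit{Loop2}}$ at the same $\ell_1$ with that frame on top. So you must show that, once the spurious iteration has exited the loop, control never again reaches $\ell_1$ at that stack height. The reasons are that enclosing tests, enclosing assertions and $\ms{stop}$ are blocked because they expect their own frame on top. Calls, and spurious re-entries of enclosing loops, return to that height only after the call or after the enclosing loop. A clean way to organise this is a well-formedness invariant: each activation's stack segment lists exactly the constructs enclosing its current or return label. Every step except this one preserves it, and a configuration violating it never reaches the final one. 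The paper's $(\Leftarrow)$ proof has the same hole. Its $\overrightarrow{\mathit{LoopMain}}$ case assumes the related control is the $\ms{from}$ statement, whereas $\sim_{\mathit{conf}}$ also admits $\ms{skip}$ with $\cont(\s)=\ell_1$.
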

\begin{journal}
\begin{figure}[t]
	\begin{mathpar}
		\inferrule* 
		{
		} 
		{ 
			\nul \eqstack \nul
		}
		\and
		\inferrule* 
		{
			\s \eqstack \s'
		} 
		{ 
			\ms{seq}(s)\cons\s \eqstack \s'
		}
		\and
		\inferrule* 
		{
			\s \eqstack \s'
		} 
		{ 
			\s \eqstack \ms{call}(\ell)\cons\s'
		}
		\and
		\inferrule* 
		{
			\s \eqstack \s'
		} 
		{ 
			\s \eqstack \ms{uncall}(\ell)\cons\s'
		}
		\and
		\inferrule* 
		{
			\context(\ell_2) = (\ms{if}~ [e_1]^{\ell_1} ~\ms{then}~ s_1 ~\ms{else}~s_2~\ms{fi} ~[e_2]^{\ell_2})
			\\
			\s \eqstack \s'
		} 
		{ 
			\ms{if\_true}([e_2]^{\ell_2})\cons\s \eqstack \ms{if\_true}(\ell_1,\ell_2)\cons\s'
		}
		\and
		\inferrule* 
		{
			\context(\ell_2) = (\ms{if}~ [e_1]^{\ell_1} ~\ms{then}~ s_1 ~\ms{else}~s_2~\ms{fi} ~[e_2]^{\ell_2})
             \\
			\s \eqstack \s'
		} 
		{ 
			\ms{if\_false}([e_2]^{\ell_2})\cons\s \eqstack \ms{if\_false}(\ell_1,\ell_2)\cons\s'
		}
		\and
		\inferrule* 
		{
			\s \eqstack \s'
		}
		{ 
			\ms{loop1}([e_1]^{\ell_1},s_1,[e_2]^{\ell_2},s_2)\cons\s \eqstack \ms{loop1}(\ell_1,s_1,\ell_2,s_2)\cons\s'
		}
		\and
		\inferrule* 
		{
			\s \eqstack \s'
		}
		{ 
			\ms{loop2}([e_1]^{\ell_1},s_1,[e_2]^{\ell_2},s_2)\cons\s \eqstack \ms{loop2}(\ell_1,s_1,\ell_2,s_2)\cons\s'
		}
	\end{mathpar}
	\caption{Equivalence relation on stacks ``$\eqstack$''}
	\label{fig:eqstack}
\end{figure}
In order to prove this result, we begin by introducing an 
equivalence relation, $\eqstack$, between the stacks of the small-step 
semantics configurations and those of the reversible semantics
(shown in Fig.~\ref{fig:eqstack}).

Moreover, we also introduce the following auxiliary function,
$\cont$, that returns the label of the \emph{first} 
block to be executed 
once the current statement in the control of the 
configuration (of the small-step semantics)
is completed:
\[
\begin{array}{rcl}
\cont(\ms{seq}(s)\cons\s) & = & \entry(s) \\
\cont(\ms{if\_true}([e]^\ell)\cons\s) & = & \ell \\
\cont(\ms{if\_false}([e]^\ell)\cons\s) & = & \ell \\
\cont(\ms{loop1}([e_1]^{\ell_1},s_1,[e_2]^{\ell_2},s_2)\cons\s) & = & 
\ell_2 \\
\cont(\ms{loop2}([e_1]^{\ell_1},s_1,[e_2]^{\ell_2},s_2)\cons\s) & = & 
\ell_1 \\
\end{array}
\]
In the following, to distinguish them, we denote the configurations
of the small-step semantics of the previous section with 
$c, c', c_1, \ldots$ and the configurations of the reversible 
semantics with $k,k',k_1,\ldots$

We say that two configurations, 
$c = \config{\sigma,s,\s}$ and $k=\config{\sigma',\ell_1,\ell_2,\s'}$, are \emph{equivalent}, denoted $c\eqconf k$, if and only if
\begin{enumerate}
 \item $\sigma = \sigma'$,
 \item $\s \eqstack \s'$, and
 \item either $\ell_2 = \entry(s)$, or $s=\ms{skip}$ and $\ell_2 = \cont(\s)$.
\end{enumerate} 
Note that $\ell_1$ is not relevant when considering only the
forward rules.

Let us now proceed with the proof of Lemma~\ref{lemma:forward-rev}.

\begin{proof}
  $(\Rightarrow)$ We consider two configurations, 
  $c_1$ and $k_1$, with $c_1 \eqconf k_1$. 
  We will show that if $c_1 \to c_2$, 
  then $k_1 \rh^\ast k_2$ and $c_2 \eqconf k_2$, since
  the claim follows trivially from this result.
  We distinguish the following cases depending on the rule applied:
  \begin{description}
    \item[\rm (\textsc{AssVarS})] 
    Then $c_1 = \config{\sigma, [x~\oplus=e]^{\ell'},\s} \to
    \config{\sigma[x \mapsto \llbracket \oplus \rrbracket(\sigma(x),v),\ms{skip},\s} = c_2$, 
    with $\tuple{\sigma}{e} \Downarrow v$. 
    Since $c_1\eqconf k_1$, we have 
    $k_1 = \config{\sigma,\ell,\ell',\s'}$ 
    with $\s \eqstack \s'$. By rule $\overrightarrow{\mathit{AssVar}}$,
    we have $k_2 = \config{\sigma[x \mapsto \llbracket \oplus \rrbracket(\sigma(x),v),\ell',\ell'',\s'}$, with $\ell'\edge\ell''$.
    Since the top element of the stack $\s$ contains the next 
    statement to
    be executed, and assuming that the CFG correctly represents 
    the control flow of the program, we have $\ell'' = \cont(\s)$ 
    and, thus, $c_2 \eqconf k_2$.
    
    \item[\rm (\textsc{AssArrS})] This case is perfectly analogous
    to the previous one.
    
    \item[\rm (\textsc{CallS})] Here, we assume that
    $\Gamma(\mathit{id})$ returns the body of procedure $\mathit{id}$,
    as usual, and $\Gamma'(\mathit{id})$ returns the same statement
    but excluding the additional statements $\ms{start}$ and
    $\ms{stop}$ that are not considered by the small-step semantics.
    Then $c_1 = \config{\sigma, [\ms{call}~\mathit{id}]^{\ell'},\s}
    \to \config{\sigma,\Gamma'(\mathit{id}),\s} = c_2$. 
    Since $c_1\eqconf k_1$, we have 
    $k_1 = \config{\sigma,\ell,\ell',\s'}$ 
    with $\s \eqstack \s'$. By rule $\overrightarrow{\mathit{Call}}$,
    we have $k_2 = \config{\sigma,\entry(\Gamma(\mathit{id})),\ell'',\ms{call}(\ell')\cons\s'}$, with $\entry(\Gamma(\mathit{id}))\edge\ell''$.
    Trivially, $\ell''$ points to the first statement in $\Gamma(\mathit{id})$ which is different from $\ms{start}$, i.e.,
    $\ell'' = \entry(\Gamma'(\mathit{id}))$.
    Moreover, since $\ms{call}$ elements do not affect to the
    stack equivalence, we have $\s \eqstack \ms{call}(\ell')\cons\s'$
    and, thus, $c_2 \eqconf k_2$.
    
    \item[\rm (\textsc{UnCallS})] This case is perfectly analogous
    to the previous one.
    
    \item[\rm (\textsc{Seq1})] Then $c_1 = \config{\sigma, s_1~s_2,\s}
    \to \config{\sigma,s_1,\ms{seq}(s_2)\cons\s} = c_2$. 
    Since $c_1\eqconf k_1$, we have 
    $k_1 = \config{\sigma,\ell,\ell',\s'}$ 
    with $\ell' = \entry(s_1~s_2) = \entry(s_1)$ and $\s \eqstack \s'$. 
    Hence, the claim follows by applying zero steps in the
    reversible semantics since 
    $\ms{seq}(s_2)\cons\s \eqstack \s'$ and, thus,
    $c_2 \eqconf k_1 = k_2$.
    
    \item[\rm (\textsc{Seq2})] Then 
    $c_1 = \config{\sigma, \ms{skip},\ms{seq}(s_2)\cons\s}\to
    \config{\sigma,s_2,\s} = c_2$. 
    Since $c_1\eqconf k_1$, we have 
    $k_1 = \config{\sigma,\ell,\ell',\s'}$ 
    with $\ell' = \entry(s_2)$ and $\ms{seq}(s_2)\cons\s 
    \eqstack \s'$ (thus, $\s\eqconf \s'$). 
    Hence, the claim follows by applying zero steps in the
    reversible semantics since $c_2 \eqconf k_1 = k_2$.
    
    \item[\rm (\textsc{IfTrue1})] Then  
    $c_1 = \config{\sigma, \ms{if}~ [e_1]^{\ell_1} ~\ms{then}~ s_1 ~\ms{else}~s_2~\ms{fi} ~[e_2]^{\ell_2},\s}
    \to \config{\sigma,s_1,\ms{if\_true}([e_2]^{\ell_2})\cons\s} = c_2$
    with $\tuple{\sigma}{e_1} \Downarrow v_1$ and  
    $\ms{is\_true?}(v_1)$.
    Since $c_1\eqconf k_1$, we have 
    $k_1 = \config{\sigma,\ell,\ell_1,\s'}$ 
    with $\s \eqstack \s'$. 
    By applying rule $\overrightarrow{\mathit{IfTrue1}}$,
    we have $k_2 = \config{\sigma,\ell_2,\entry(s_1),
    \ms{if\_true}(\ell_1,\ell_2)\cons\s'}$. Hence,
    $c_2\eqconf k_2$ follows trivially since
    $\ms{if\_true}([e_2]^{\ell_2})\cons\s \eqstack \ms{if\_true}(\ell_1,\ell_2)\cons\s'$.
    
    \item[\rm (\textsc{IfTrue2})] Then $c_1 = \config{\sigma, \ms{skip},\ms{if\_true}([e_2]^{\ell_2})\cons\s}
    \to \config{\sigma,\ms{skip},\s} = c_2$. 
    Since $c_1\eqconf k_1$, we have 
    $k_1 = \config{\sigma,\ell,\ell_2,\ms{if\_true}(\ell_1,\ell_2)\cons\s'}$ 
    with $\s \eqstack \s'$. 
    By rule $\overrightarrow{\mathit{IfTrue2}}$, we have
    $k_2 = \config{\sigma,\ell_2,\ell_2',\s'}$ with 
    $\ell_2 \edge \ell'_2$. Since the top element of the 
    stack $\s$ contains the next statement to
    be executed, assuming that the CFG correctly represents 
    the control flow of the program, we have $\ell'_2 = \cont(\s)$ 
    and, thus, $c_2 \eqconf k_2$.
    
    \item[\rm (\textsc{IfFalse1}) and (\textsc{IfFalse2})] These
    cases are perfectly analogous to the previous two cases.
    
    \item[\rm (\textsc{LoopMainS})] Then, we have 
    $c_1 = \config{\sigma, \ms{from}~ [e_1]^{\ell_1} ~\ms{do}~ s_1 ~\ms{loop}~s_2~\ms{until} ~[e_2]^{\ell_2}),\s}
    \to \config{\sigma,s_1,\ms{loop1}([e_1]^{\ell_1},s_2,[e_2]^{\ell_2},s_2)\cons\s} = c_2$
    with $\tuple{\sigma}{e_1} \Downarrow v_1$ and  
    $\ms{is\_true?}(v_1)$. 
    Since $c_1\eqconf k_1$, we have 
    $k_1 = \config{\sigma,\ell,\ell_1,\s'}$ 
    with $\s \eqstack \s'$. 
    By rule $\overrightarrow{\mathit{LoopMain}}$,
    we have $k_2 = \config{\sigma,\ell_1,\entry(s_1),
    \ms{loop1}(\ell_1,s_1,\ell_2,s_2)\cons\s'}$. Hence,
    $\ms{loop1}([e_1]^{\ell_1},s_1,[e_2]^{\ell_2},s_2)\cons\s \eqstack \ms{loop1}(\ell_1,s_1,\ell_2,s_2)\cons\s'$ and, thus,
    $c_2\eqconf k_2$.
    
    \item[\rm (\textsc{LoopBaseS})] Then $c_1 = \config{\sigma, \ms{skip},\ms{loop1}([e_1]^{\ell_1},s_1,[e_2]^{\ell_2},s_2)\cons\s}
    \to \config{\sigma,\ms{skip},\s} = c_2$,
    with $\tuple{\sigma}{e_2} \Downarrow v_2$ and  
    $\ms{is\_true?}(v_2)$.
    Since $c_1\eqconf k_1$, we have 
    $k_1 = \config{\sigma,\ell,\ell_2,\ms{loop1}(\ell_1,s_1,\ell_2,s_2)\cons\s'}$ 
    and, moreover,  $\ms{loop1}([e_1]^{\ell_1},s_1,[e_2]^{\ell_2},s_2)\cons\s \eqstack \ms{loop1}(\ell_1,s_1,\ell_2,s_2)\cons\s'$ (and
    thus $\s \eqstack \s'$). 
    By rule $\overrightarrow{\mathit{LoopBase}}$, we have
    $k_2 = \config{\sigma,\ell_2,\ell_2',\s'}$ with 
    $\ell_2 \edge \ell'_2$ and $\ell'_2\neq\entry(s_1)$. 
    Since the top element of the 
    stack $\s$ contains the next statement to
    be executed once the loop is completed, assuming that 
    the CFG correctly represents 
    the control flow of the program, we have $\ell'_2 = \cont(\s)$ 
    and, thus, $c_2 \eqconf k_2$.
    
    \item[\rm (\textsc{Loop1})]  Then, we have the step  $c_1 = \config{\sigma, \ms{skip},\ms{loop1}([e_1]^{\ell_1},s_1,[e_2]^{\ell_2},s_2)\cons\s}
    \to \config{\sigma,s_2,\ms{loop2}([e_1]^{\ell_1},s_1,[e_2]^{\ell_2},s_2)\cons\s} = c_2$, 
    with $\tuple{\sigma}{e_2} \Downarrow v_2$ and  
    $\ms{is\_false?}(v_2)$.
    Since $c_1\eqconf k_1$, we have 
    $k_1 = \config{\sigma,\ell,\ell_2,\ms{loop1}(\ell_1,s_1,\ell_2,s_2)\cons\s'}$, 
    where $\ms{loop1}([e_1]^{\ell_1},s_1,[e_2]^{\ell_2},s_2)\cons\s \eqstack \ms{loop1}(\ell_1,s_1,\ell_2,s_2)\cons\s'$.
    By rule $\overrightarrow{\mathit{Loop1}}$, we have
    $k_2 = \config{\sigma,\ell_2,\entry(s_2),\ms{loop2}(\ell_1,s_1,\ell_2,s_2)\cons\s'}$ and $c_2 \eqconf k_2$ holds.
    
    \item[\rm (\textsc{Loop2})]  Then, we have the step $c_1 = \config{\sigma, \ms{skip},\ms{loop2}([e_1]^{\ell_1},s_1,[e_2]^{\ell_2},s_2)\cons\s}
    \to \config{\sigma,s_1,\ms{loop1}([e_1]^{\ell_1},s_1,[e_2]^{\ell_2},s_2)\cons\s} = c_2$,
    with $\tuple{\sigma}{e_1} \Downarrow v_1$ and  
    $\ms{is\_false?}(v_1)$. 
    Since $c_1\eqconf k_1$, we have 
    $k_1 = \config{\sigma,\ell,\ell_1,\ms{loop2}(\ell_1,s_1,\ell_2,s_2)\cons\s'}$, where 
    $\ms{loop2}([e_1]^{\ell_1},s_1,[e_2]^{\ell_2},s_2)\cons\s \eqstack \ms{loop2}(\ell_1,s_1,\ell_2,s_2)\cons\s'$.
    By rule $\overrightarrow{\mathit{Loop2}}$, we have
    $k_2 = \config{\sigma,\ell_1,\entry(s_1),\ms{loop1}(\ell_1,s_1,\ell_2,s_2)\cons\s'}$ and $c_2 \eqconf k_2$ holds.
  \end{description} 
  ($\Leftarrow$) Again, we consider two configurations, 
  $c_1$ and $k_1$, with $c_1 \eqconf k_1$. 
  Now, we will show that if $k_1 \rh k_2$, 
  then $c_1 \to^\ast c_2$ and $c_2 \eqconf k_2$, since
  the claim follows trivially from this result.
  We distinguish the following cases depending on the rule applied:
  \begin{description}
  \item[\rm ($\overrightarrow{\mathit{AssVar}}$)] Then, 
  $k_1 = \config{\sigma,\ell,\ell',\s'}\rh \config{\sigma[x \mapsto \llbracket \oplus \rrbracket(\sigma(x),v)],\ell',\ell'',\s'} = k_2$, with
  $\block(\ell') = (x~\oplus=e)$, 
  $\tuple{\sigma}{e} \Downarrow v$, and $\ell' \edge \ell''$.
  Since $c_1 \eqconf k_1$, we have that 
  $c_1 = \config{\sigma,s,\s}$ where $\s \eqstack \s'$ and, thus,
  $s$ can be either $[x~\oplus=e]^{\ell'}$,
  a sequence whose first statement is $[x~\oplus=e]^{\ell'}$,
  or $\ms{skip}$ with $\ell' = \cont(\s)$. 
  W.l.o.g, let us
  consider the second case (the other two cases would be similar).
  Then, we have a derivation of the form
  $c_1 \to^\ast \config{\sigma,x~\oplus=e,\s_2} = c'_1$ such
  that $\s_2 \eqstack \s'$ since only $\ms{seq}$ elements
  have been added to the stack. Now, by rule 
  \textsc{AssVarS}, we have
  $c'_1 \to \config{\sigma[x \mapsto \llbracket \oplus \rrbracket(\sigma(x),v)],\ms{skip},\s_2}$ and the claim follows
  since the top element of the stack $\s_2$ contains the next 
  statement to be executed and, assuming that the CFG 
  correctly represents the control flow of the program, 
  we have $\ell'' = \cont(\s_2)$.
  
  \item[\rm ($\overrightarrow{\mathit{AssArrS}}$)] This case is
  perfectly analogous to the previous one.
  
  \item[\rm ($\overrightarrow{\mathit{Call}}$)] As in the proof
  above, we assume that $\Gamma'(\mathit{id})$ returns the 
  body of procedure $\mathit{id}$ excluding the 
  statements $\ms{start}$ and $\ms{stop}$.
  Then, we have 
  $k_1 = \config{\sigma,\ell,\ell',\s'} \rh  \config{\sigma,\entry(\Gamma(id)),\ell'',\ms{call}(\ell')\cons\s'} = k_2$,
  with $\block(\ell') = (\ms{call}~id)$ and 
  $\entry(\Gamma(id)) \edge \ell''$.
  Since $c_1 \eqconf k_1$, we have that 
  $c_1 = \config{\sigma,\ms{call}~\mathit{id},\s}$ with 
  $\s \eqstack \s'$.\footnote{For simplicity, in this case 
  and subsequent ones, we will ignore the possibility of the 
  configuration having a sequence in the control. 
  Otherwise, we would have to proceed as in the first case
  and perform a number of steps with rules \textsc{Seq1} and/or 
  \textsc{Seq2}.}
  By rule \textsc{CallS}, we have 
  $c_1 \to \config{\sigma,\Gamma'(id),\s} = c_2$.
  Trivially, $\ell''$ points to the first statement in 
  $\Gamma(\mathit{id})$ which is different from 
  $\ms{start}$, i.e., $\ell''$ points to 
  $\entry(\Gamma'(\mathit{id}))$.
  Moreover, since $\ms{call}$ elements do not affect to the
  stack equivalence, we have $c_2 \eqconf k_2$.
  
  \item[\rm ($\overrightarrow{\mathit{Return1}}$)] Then, we
  have $k_1 = \config{\sigma,\ell,\ell',\ms{call}(\ell'')\cons\s'} 
  \rh  \config{\sigma,\ell'',\ell''',\s'} = k_2$,
  with $\block(\ell') = \ms{stop}$ and $\ell'' \edge \ell'''$
  in the CFG.
  Then, since $c_1 \eqconf k_1$, we have 
  $c_1 = \config{\sigma,\ms{skip},\s}$ with $\s \eqstack \s'$
  and $\cont(\s)$ pointing to the next statement after the call.
  In the small-step semantics, there is no 
  rule to model the return of a procedure since, once the 
  execution of the procedure body is completed, 
  execution simply continues with the next statement after 
  the corresponding $\ms{call}$. Assuming that $\ms{call}$ was
  part of a sequence (otherwise, execution is done), 
  $\s$ will have the form $\ms{seq}(s)\cons\s_2$, where $s$ 
  is the next statement to be executed after the $\ms{call}$. 
  Therefore, applying rule \textsc{Seq2}, we have
  $c_1 \to \config{\sigma,s,\s_2}$ and the claim follows 
  since $\ell''' =\entry(s)$, and $\s_2 \eqstack \s'$ also
  holds since $\ms{seq}$ elements 
  are not considered by the relation ``$\eqstack$''.
  
  \item[\rm ($\overrightarrow{\mathit{UnCall}}$ and $\overrightarrow{\mathit{Return2}}$)] These cases are perfectly
  analogous to the previous two cases.

  \item[\rm ($\overrightarrow{\mathit{Skip}}$)] Then, we have
  $k_1 = \config{\sigma,\ell,\ell',\s'} \rh  \config{\sigma,\ell',\ell'',\s'} = k_2$, 
  with $\block(\ell') = \ms{skip}$ and $\ell' \edge \ell''$ in
  the CFG.
  Since $c_1 = k_1$, we have $c_1 = \config{\sigma,\ms{skip},\s}$
  with $\s \eqstack \s'$.  
  We note that rule $\overrightarrow{\mathit{Skip}}$ is needed in 
  the reversible semantics only in case there is a statement 
  $\ms{skip}$ in the program. 
  In contrast, within the small-step semantics, 
  a similar rule is not necessary since it will simply be 
  dealt with as any other ``finished'' statement 
  (the execution of all statements ends with $\ms{skip}$). 
  Therefore, we consider a zero-step derivation with 
  $c_1 = c_2$, and $c_2 \eqconf k_2$ holds since $c_1 \eqconf k_1$
  and, thus, $\ell' = \ell'' = \cont(\s)$.
  
  \item[\rm ($\overrightarrow{\mathit{IfTrue1}}$] Then, 
  $k_1 = \config{\sigma,\ell,\ell_1,\s'} \rh  
  \config{\sigma,\ell_1,\entry(s_1),\ms{if\_true}(\ell_1,\ell_2)\cons\s'} = k_2$, with $\block(\ell_1) = e_1$,  
  $\context(\ell_1) = (\ms{if}~ [e_1]^{\ell_1} ~\ms{then}~ s_1 ~\ms{else}~s_2~\ms{fi} ~[e_2]^{\ell_2})$,
  $\tuple{\sigma}{e_1} \Downarrow v_1$, and 
  $\ms{is\_true?}(v_1)$.
  Since $c_1 \eqconf k_1$ holds, we have that
  $c_1 = \config{\sigma,\ms{if}~ [e_1]^{\ell_1} ~\ms{then}~ s_1 ~\ms{else}~s_2~\ms{fi} ~[e_2]^{\ell_2},\s}$ with $\s \eqstack \s'$.
  By rule \textsc{IfTrue1}, we derive
  $c_2 = \config{\sigma,s_1,\ms{if\_true}(e_2)\cons\s}$ and
  the claim follows.
  
  \item[\rm ($\overrightarrow{\mathit{IfTrue2}}$)] Then, we have
  $k_1 = \config{\sigma,\ell,\ell_2,\ms{if\_true}(\ell_1,\ell_2)\cons\s'}
  \rh \config{\sigma,\ell_2,\ell'_2,\s'} = k_2$, with 
  $\block(\ell_2) = e_2$, $\tuple{\sigma}{e_2} \Downarrow v_2$,
  $\ms{is\_true?}(v_2)$, and $\ell_2 \edge \ell'_2$.
  Since $c_1 \eqconf k_1$, we have that 
  $c_1 = \config{\sigma,\ms{skip},\ms{if\_true}([e_2]^{\ell_2})\cons\s}$ 
  with $\s \eqstack \s'$. By rule \textsc{IfTrue2}, we have
  $c_1 \to \config{\sigma,\ms{skip},\s} = c_2$. Hence, 
  assuming that the CFG correctly represents the control 
  flow of the program, we have that $\ell'_2 = \cont(\s)$
  points to the next statement in the stack to be 
  executed after the conditional and, thus,
  $c_2 \eqconf k_2$.
  
  \item[\rm ($\overrightarrow{\mathit{IfFalse1}}$ and $\overrightarrow{\mathit{IfFalse2}}$)] These cases are perfectly
  analogous to the previous two cases.
  
  \item[\rm ($\overrightarrow{\mathit{LoopMain}}$)] Then, $k_1 = \config{\sigma,\ell,\ell_1,\s'}
  \rh  
  \config{\sigma,\ell_1,\entry(s_1),\ms{loop1}(\ell_1,s_1,\ell_2,s_2)\cons\s'} = k_2$, with  
  $\block(\ell_1) = e_1$, $\context(\ell_1) =
  (\ms{from}~ [e_1]^{\ell_1} ~\ms{do}~ s_1 ~\ms{loop}~s_2~
  \ms{until} ~[e_2]^{\ell_2})$, 
  $\tuple{\sigma}{e_1} \Downarrow v_1$, and
  $\ms{is\_true?}(v_1)$. 
  Since $c_1 \eqconf k_1$ holds, we have that
  $c_1 = \config{\sigma,\ms{from}~ [e_1]^{\ell_1} ~\ms{do}~ s_1 ~\ms{loop}~s_2~
  \ms{until} ~[e_2]^{\ell_2},\s}$ with $\s \eqstack \s'$.
  By rule \textsc{LoopMainS}, we derive
  $c_2 = \config{\sigma,s_1,\ms{loop1}([e_1]^{\ell_1},s_1,[e_2]^{\ell_2},s_2)\cons\s}$ and
  the claim follows.
  
  \item[\rm ($\overrightarrow{\mathit{LoopBase}}$)] Then, 
  $k_1 = \config{\sigma,\ell,\ell_2,\ms{loop1}(\ell_1,s_1,\ell_2,s_2)\cons\s'}
  \rh \config{\sigma,\ell_2,\ell'_2,\s'} = k_2$, with 
  $\block(\ell_2) = e_2$, $\tuple{\sigma}{e_2} \Downarrow v_2$,
  $\ms{is\_true?}(v_2)$, $\ell_2 \edge \ell'_2$ in the CFG, and 
  $\ell'_2 \neq \entry(s_1)$.
  Since $c_1 \eqconf k_1$, we have  
  $c_1 = \config{\sigma,\ms{skip},\ms{loop1}([e_1]^{\ell_1},s_1,[e_2]^{\ell_2},s_2)\cons\s}$ 
  with $\s \eqstack \s'$. By rule \textsc{LoopBaseS}, we have
  $c_1 \to \config{\sigma,\ms{skip},\s} = c_2$. Hence, 
  assuming that the CFG correctly represents the control 
  flow of the program, we have that $\ell'_2 = \cont(\s)$
  points to the next statement in the stack to be 
  executed after the loop and, thus, $c_2 \eqconf k_2$.
  
  \item[\rm ($\overrightarrow{\mathit{Loop1}}$)] Then, we have the following step
  $k_1 = \config{\sigma,\ell,\ell_2,\ms{loop1}(\ell_1,s_1,\ell_2,s_2)\cons\s'}
  \rh 
  \config{\sigma,\ell_2,\entry(s_2),\ms{loop2}(\ell_1,s_1,\ell_2,s_2)\cons\s'} = k_2$, 
  with 
  $\block(\ell_2) = e_2$, $\tuple{\sigma}{e_2} \Downarrow v_2$,
  and $\ms{is\_false?}(v_2)$.
  Since $c_1 \eqconf k_1$ holds, we have that
  $c_1 = \config{\sigma,\ms{skip},\ms{loop1}([e_1]^{\ell_1},s_1,[e_2]^{\ell_2},s_2)\cons\s}$ with $\s \eqstack \s'$.
  By rule \textsc{Loop1}, we derive
  $c_2 = \config{\sigma,s_2,\ms{loop2}([e_1]^{\ell_1},s_1,[e_2]^{\ell_2},s_2)\cons\s}$ and
  the claim follows.
  
  \item[\rm ($\overrightarrow{\mathit{Loop2}}$)] Then, we have the following step
  $k_1 = \config{\sigma,\ell,\ell_1,\ms{loop1}(\ell_1,s_1,\ell_2,s_2)\cons\s'}
  \rh 
  \config{\sigma,\ell_1,\entry(s_1),\ms{loop1}(\ell_1,s_1,\ell_2,s_2)\cons\s'} = k_2$, 
  with 
  $\block(\ell_1) = e_1$, $\tuple{\sigma}{e_1} \Downarrow v_1$,
  and $\ms{is\_false?}(v_1)$.
  Since $c_1 \eqconf k_1$ holds, we have that
  $c_1 = \config{\sigma,\ms{skip},\ms{loop2}([e_1]^{\ell_1},s_1,[e_2]^{\ell_2},s_2)\cons\s}$ with $\s \eqstack \s'$.
  By rule \textsc{Loop2}, we derive
  $c_2 = \config{\sigma,s_1,\ms{loop1}([e_1]^{\ell_1},s_1,[e_2]^{\ell_2},s_2)\cons\s}$ and
  the claim follows. \qed
  \end{description}
\end{proof}
\end{journal}

\begin{conference}
\begin{example} \label{ex:forward_trace}
  Given the program in Fig.~\ref{fig:program}
  (and the CFG in Fig.~\ref{fig:cfg-ex}), we have the
  forward derivation shown in 
  Figure~\ref{fig:forward_derivation}, where we highlight 
  in red the ``active'' label in each step (i.e., the label 
  that represents the next statement to be executed by the
  forward reversible semantics).
  
  \begin{figure}
  $
  \begin{array}{rl}
  & \config{\emptystate, 1,\red{2}, \nul}\\
  
  {}_{\overrightarrow{\mathit{AssVar}}} &  
  \config{\{n\mapsto 2\},2,\red{3},\nul}\\

  {}_{\overrightarrow{\mathit{Call}}} & 
  \config{\{n\mapsto 2\},5,\red{6},\ms{call}(3)\cons\nul}\\

  {}_{\overrightarrow{\mathit{AssVar}}} & 
  \config{\{n\mapsto 2,i\mapsto 1\},6,\red{7},\ms{call}(3)\cons\nul}\\

  {}_{\overrightarrow{\mathit{LoopMain}}} & 
  \config{\{n\mapsto 2, i\mapsto 1\},7,\red{8},\ms{loop1}(7,s_1,13,s_2)\cons\ms{call}(3)\cons\nul}\\

  {}_{\overrightarrow{\mathit{IfFalse1}}} & 
  \config{\{n\mapsto 2,i\mapsto 1\},8,\red{10},\ms{if\_false}(8,11)\cons\ms{loop1}(7,s_1,13,s_2)\cons\ms{call}(3)\cons\nul}\\

  {}_{\overrightarrow{\mathit{Skip}}} & 
  \config{\{n\mapsto 2,i\mapsto 1\},10,\red{11},\ms{if\_false}(8,11)\cons\ms{loop1}(7,s_1,13,s_2)\cons\ms{call}(3)\cons\nul}\\

  {}_{\overrightarrow{\mathit{IfFalse2}}} & 
  \config{\{n\mapsto 2,i\mapsto 1\},11,\red{13},\ms{loop1}(7,s_1,13,s_2)\cons\ms{call}(3)\cons\nul}\\
  
  {}_{\overrightarrow{\mathit{Loop1}}} & 
  \config{\{n\mapsto 2,i\mapsto 1\},13,\red{12},\ms{loop2}(7,s_1,13,s_2)\cons\ms{call}(3)\cons\nul}\\
  
  {}_{\overrightarrow{\mathit{AssVar}}} & 
  \config{\{n\mapsto 2,i\mapsto 2\},12,\red{7},\ms{loop2}(7,s_1,13,s_2)\cons\ms{call}(3)\cons\nul}\\

  {}_{\overrightarrow{\mathit{Loop2}}} & 
  \config{\{n\mapsto 2,i\mapsto 2\},7,\red{8},\ms{loop1}(7,s_1,13,s_2)\cons\ms{call}(3)\cons\nul}\\

%
%
%
%
%

  {}_{\overrightarrow{\mathit{IfTrue1}}} & 
  \config{\{n\mapsto 2,i\mapsto 2\},8,\red{9},\ms{if\_true}(8,11)\cons\ms{loop1}(7,s_1,13,s_2)\cons\ms{call}(3)\cons\nul}\\
  
  {}_{\overrightarrow{\mathit{AssVar}}} & 
  \config{\{n\mapsto 2,i\mapsto 2,total\mapsto 2\},9,\red{11},\ms{if\_true}(8,11)\cons\ms{loop1}(7,s_1,13,s_2)\cons\ms{call}(3)\cons\nul}\\

  {}_{\overrightarrow{\mathit{IfTrue2}}} & 
  \config{\{n\mapsto 2,i\mapsto 2,total\mapsto 2\},11,\red{13},\ms{loop1}(7,s_1,13,s_2)\cons\ms{call}(3)\cons\nul}\\

  {}_{\overrightarrow{\mathit{LoopBase}}} & 
  \config{\{n\mapsto 2,i\mapsto 2,total\mapsto 2\},13,\red{14},\ms{call}(3)\cons\nul}\\

  {}_{\overrightarrow{\mathit{AssVar}}} & 
  \config{\{n\mapsto 4,i\mapsto 2,total\mapsto 2\},14,\red{15},\ms{call}(3)\cons\nul}\\
  
  {}_{\overrightarrow{\mathit{Return1}}} & 
  \config{\{n\mapsto 4,i\mapsto 2,total\mapsto 2\},3,\red{4},\nul}\\
  \end{array}
  $
  \caption{Forward derivation with the reversible semantics}
  \label{fig:forward_derivation}
  \end{figure}
\end{example}
\end{conference}

\begin{journal}
\begin{example} \label{ex:forward_trace}
  Given the program in Fig.~\ref{fig:program}
  (and the CFG in Fig.~\ref{fig:cfg-ex}), we have the
  forward derivation shown in 
  Figure~\ref{fig:forward_derivation}, where we highlight 
  in red the ``active'' label in each step (i.e., the label 
  that represents the next statement to be executed by the
  forward reversible semantics).
  
  \begin{figure}
  $
  \begin{array}{rl}
  & \config{\emptystate, 1,\red{2}, \nul}\\
  
  {}_{\overrightarrow{\mathit{AssVar}}} &  
  \config{\{n\mapsto 3\},2,\red{3},\nul}\\

  {}_{\overrightarrow{\mathit{Call}}} & 
  \config{\{n\mapsto 3\},5,\red{6},\ms{call}(3)\cons\nul}\\

  {}_{\overrightarrow{\mathit{AssVar}}} & 
  \config{\{n\mapsto 3,i\mapsto 1\},6,\red{7},\ms{call}(3)\cons\nul}\\

  {}_{\overrightarrow{\mathit{LoopMain}}} & 
  \config{\{n\mapsto 3, i\mapsto 1\},7,\red{8},\ms{loop1}(7,s_1,13,s_2)\cons\ms{call}(3)\cons\nul}\\

  {}_{\overrightarrow{\mathit{IfFalse1}}} & 
  \config{\{n\mapsto 3,i\mapsto 1\},8,\red{10},\ms{if\_false}(8,11)\cons\ms{loop1}(7,s_1,13,s_2)\cons\ms{call}(3)\cons\nul}\\

  {}_{\overrightarrow{\mathit{Skip}}} & 
  \config{\{n\mapsto 3,i\mapsto 1\},10,\red{11},\ms{if\_false}(8,11)\cons\ms{loop1}(7,s_1,13,s_2)\cons\ms{call}(3)\cons\nul}\\

  {}_{\overrightarrow{\mathit{IfFalse2}}} & 
  \config{\{n\mapsto 3,i\mapsto 1\},11,\red{13},\ms{loop1}(7,s_1,13,s_2)\cons\ms{call}(3)\cons\nul}\\
  
  {}_{\overrightarrow{\mathit{Loop1}}} & 
  \config{\{n\mapsto 3,i\mapsto 1\},13,\red{12},\ms{loop2}(7,s_1,13,s_2)\cons\ms{call}(3)\cons\nul}\\
  
  {}_{\overrightarrow{\mathit{AssVar}}} & 
  \config{\{n\mapsto 3,i\mapsto 2\},12,\red{7},\ms{loop2}(7,s_1,13,s_2)\cons\ms{call}(3)\cons\nul}\\

  {}_{\overrightarrow{\mathit{Loop2}}} & 
  \config{\{n\mapsto 3,i\mapsto 2\},7,\red{8},\ms{loop1}(7,s_1,13,s_2)\cons\ms{call}(3)\cons\nul}\\

  {}_{\overrightarrow{\mathit{IfFalse1}}} & 
  \config{\{n\mapsto 3,i\mapsto 2\},8,\red{10},\ms{if\_false}(8,11)\cons\ms{loop1}(7,s_1,13,s_2)\cons\ms{call}(3)\cons\nul}\\

  {}_{\overrightarrow{\mathit{Skip}}} & 
  \config{\{n\mapsto 3,i\mapsto 2\},10,\red{11},\ms{if\_false}(8,11)\cons\ms{loop1}(7,s_1,13,s_2)\cons\ms{call}(3)\cons\nul}\\

  {}_{\overrightarrow{\mathit{IfFalse2}}} & 
  \config{\{n\mapsto 3,i\mapsto 2\},11,\red{13},\ms{loop1}(7,s_1,13,s_2)\cons\ms{call}(3)\cons\nul}\\

  {}_{\overrightarrow{\mathit{Loop1}}} & 
  \config{\{n\mapsto 3,i\mapsto 2\},13,\red{12},\ms{loop2}(7,s_1,13,s_2)\cons\ms{call}(3)\cons\nul}\\
  
  {}_{\overrightarrow{\mathit{AssVar}}} & 
  \config{\{n\mapsto 3,i\mapsto 3\},12,\red{7},\ms{loop2}(7,s_1,13,s_2)\cons\ms{call}(3)\cons\nul}\\
  
  {}_{\overrightarrow{\mathit{Loop2}}} & 
  \config{\{n\mapsto 3,i\mapsto 3\},7,\red{8},\ms{loop1}(7,s_1,13,s_2)\cons\ms{call}(3)\cons\nul}\\

  {}_{\overrightarrow{\mathit{IfTrue1}}} & 
  \config{\{n\mapsto 3,i\mapsto 3\},8,\red{9},\ms{if\_true}(8,11)\cons\ms{loop1}(7,s_1,13,s_2)\cons\ms{call}(3)\cons\nul}\\
  
  {}_{\overrightarrow{\mathit{AssVar}}} & 
  \config{\{n\mapsto 3,i\mapsto 3,total\mapsto 3\},9,\red{11},\ms{if\_true}(8,11)\cons\ms{loop1}(7,s_1,13,s_2)\cons\ms{call}(3)\cons\nul}\\

  {}_{\overrightarrow{\mathit{IfTrue2}}} & 
  \config{\{n\mapsto 3,i\mapsto 3,total\mapsto 3\},11,\red{13},\ms{loop1}(7,s_1,13,s_2)\cons\ms{call}(3)\cons\nul}\\

  {}_{\overrightarrow{\mathit{LoopBase}}} & 
  \config{\{n\mapsto 3,i\mapsto 3,total\mapsto 3\},13,\red{14},\ms{call}(3)\cons\nul}\\

  {}_{\overrightarrow{\mathit{AssVar}}} & 
  \config{\{n\mapsto 6,i\mapsto 3,total\mapsto 3\},14,\red{15},\ms{call}(3)\cons\nul}\\
  
  {}_{\overrightarrow{\mathit{Return1}}} & 
  \config{\{n\mapsto 6,i\mapsto 3,total\mapsto 3\},3,\red{4},\nul}\\
  \end{array}
  $
  \caption{Forward derivation with the reversible semantics}
  \label{fig:forward_derivation}
  \end{figure}
\end{example}
\end{journal}

\subsection{The Backward Rules}

This section introduces the rules of the \emph{backward} 
reversible semantics. Basically, a backward rule operates 
as follows: it first identifies the statement referenced 
by the \emph{first} label within the configuration control 
(i.e., the label corresponding to the last executed statement), 
then undoes its effects, and finally updates the 
labels accordingly (mainly traversing the original CFG backwards).

In the following,
$\flow^{-1}$ denotes the function that computes the \emph{inverse} CFG
of a program: 
$
\flow^{-1}(s) = \{ (\ell',\ell) \mid (\ell,\ell')\in\flow(s)\}
$.
For clarity, in the transition rules of the backward semantics, we 
denote the edges $(\ell,\ell')\in \flow^{-1}(s)$ of the inverse 
CFG as $\ell \iedge \ell'$; i.e., $\ell \iedge \ell'$ is 
equivalent to $\ell' \edge \ell$ in the original CFG.

The following result states the equivalence between computing
the inverse CFG of a statement and a standard CFG of the
corresponding inverse statement:

\begin{lemma} Given a statement $s$, we have
$\flow^{-1}(s) = \flow(\mathcal{I}\llbracket s \rrbracket)$.
\end{lemma}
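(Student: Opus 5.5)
The proof is by structural induction on $s$, but first I will make explicit the convention on labels that the statement tacitly relies on. The inverter $\mathcal{I}\llbracket\,\rrbracket$ of Fig.~\ref{fig:inverter} is extended to labeled statements by keeping every label attached to the same elementary block. Thus $\mathcal{I}\llbracket [x\oplus=e]^\ell\rrbracket = [x~\mathcal{I}_{op}\llbracket\oplus\rrbracket=e]^\ell$, and $\mathcal{I}\llbracket[\ms{start}]^\ell\rrbracket=[\ms{stop}]^\ell$ (and vice versa). For conditionals,
\[
\mathcal{I}\llbracket \ms{if}~[e_1]^{\ell_1}~\ms{then}~s_1~\ms{else}~s_2~\ms{fi}~[e_2]^{\ell_2}\rrbracket
= \ms{if}~[e_2]^{\ell_2}~\ms{then}~\mathcal{I}\llbracket s_1\rrbracket~\ms{else}~\mathcal{I}\llbracket s_2\rrbracket~\ms{fi}~[e_1]^{\ell_1},
\]
and analogously for loops. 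Without this convention the two sides of the equation would not even range over the same label set, so I will state it at the start of the proof.

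The first step is an auxiliary claim, proved by structural induction on $s$:
\[
\entry(\mathcal{I}\llbracket s\rrbracket)=\exit(s)\qquad\text{and}\qquad \exit(\mathcal{I}\llbracket s\rrbracket)=\entry(s).
\]
For an elementary block $[o]^\ell$ both sides are $\ell$. For a conditional or a loop, the inverted statement has test label $\ell_2$ and final label $\ell_1$, so its entry is $\ell_2=\exit(s)$ and its exit is $\ell_1=\entry(s)$. For a sequence, $\entry(\mathcal{I}\llbracket s_1~s_2\rrbracket)=\entry(\mathcal{I}\llbracket s_2\rrbracket)=\exit(s_2)=\exit(s_1~s_2)$ by the induction hypothesis, and symmetrically for $\exit$. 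One small point is that $\mathcal{I}\llbracket s_2\rrbracket~\mathcal{I}\llbracket s_1\rrbracket$ need not be right-associated. However, $\entry$, $\exit$ and $\flow$ are defined compositionally on binary sequences, so associativity is irrelevant here.

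The main claim $\flow^{-1}(s)=\flow(\mathcal{I}\llbracket s\rrbracket)$ is then proved by structural induction on $s$, using the auxiliary claim.
\begin{itemize}
\item For elementary blocks, both sides are $\emptyset$.
\item For a sequence, $\flow(\mathcal{I}\llbracket s_2\rrbracket~\mathcal{I}\llbracket s_1\rrbracket)=\{(\exit(\mathcal{I}\llbracket s_2\rrbracket),\entry(\mathcal{I}\llbracket s_1\rrbracket))\}\cup\flow(\mathcal{I}\llbracket s_2\rrbracket)\cup\flow(\mathcal{I}\llbracket s_1\rrbracket)$. This equals $\{(\entry(s_2),\exit(s_1))\}\cup\flow^{-1}(s_2)\cup\flow^{-1}(s_1)$, which is exactly $\flow^{-1}(s_1~s_2)$.
\item For a conditional, the four new edges of the inverted statement are $(\ell_2,\exit(s_1))$, $(\entry(s_1),\ell_1)$, $(\ell_2,\exit(s_2))$ and $(\entry(s_2),\ell_1)$. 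These are precisely the reversals of the four edges $(\ell_1,\entry(s_i))$ and $(\exit(s_i),\ell_2)$ of the original statement. The recursive parts are handled by the induction hypothesis.
\item For a loop, the inverted statement contributes $(\ell_2,\exit(s_1))$, $(\entry(s_1),\ell_1)$, $(\ell_1,\exit(s_2))$ and $(\entry(s_2),\ell_2)$. These are the reversals of $(\ell_1,\entry(s_1))$, $(\exit(s_1),\ell_2)$, $(\ell_2,\entry(s_2))$ and $(\exit(s_2),\ell_1)$.
\end{itemize}
Throughout, the proof uses that $\flow^{-1}$ distributes over union.

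I expect no real difficulty in the case analysis; it is bookkeeping once the auxiliary claim is in place. The main obstacle is the loop case. There the roles of $s_1$ and $s_2$ are not swapped by inversion, since the inverse of a loop is $\ms{from}~e_2~\ms{do}~\mathcal{I}\llbracket s_1\rrbracket~\ms{loop}~\mathcal{I}\llbracket s_2\rrbracket~\ms{until}~e_1$, while the test and assertion labels are swapped. I will check carefully that the back-edge $(\exit(s_2),\ell_1)$ and the edge $(\ell_2,\entry(s_2))$ still match up after reversal, which they do by the auxiliary claim. The other delicate point is the labeling convention itself, which is why it comes first.
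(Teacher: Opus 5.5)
Your proposal is correct and follows the paper's proof: structural induction on $s$, computing the new edges of the conditional, loop and sequence cases from $\entry(\mathcal{I}\llbracket s\rrbracket)=\exit(s)$ and $\exit(\mathcal{I}\llbracket s\rrbracket)=\entry(s)$ and closing with the induction hypothesis. The differences are minor: you prove this entry/exit swap as a separate inductive claim where the paper simply calls it straightforward, and you state explicitly the label-preserving convention for $\mathcal{I}$ that the paper's computation uses tacitly.
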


\begin{journal}
\begin{proof}
  We prove the claim by structural induction on the considered
  statement.
  For elementary blocks, the proof follows trivially. Let us now
  consider the remaining cases:
  \begin{itemize}
  \item Let $s = (\ms{if}~ [e_1]^{\ell_1} ~\ms{then}~ s_1 
~\ms{else}~s_2~\ms{fi} ~[e_2]^{\ell_2})$. Then, we have
  \[
  \begin{array}{llll}
  \flow(\mathcal{I}\llbracket s \rrbracket) & = & 
  \flow(\ms{if}~[e_2]^{\ell_2} ~\ms{then} ~\mathcal{I}\llbracket s_1 \rrbracket~\ms{else}~\mathcal{I}\llbracket s_2 \rrbracket~\ms{fi}~[e_1]^{\ell_1}) \\
  
  & = & \{(\ell_2,\entry(\mathcal{I}\llbracket s_1 \rrbracket)),
  (\exit(\mathcal{I}\llbracket s_1 \rrbracket),\ell_1),\\
  && ~~(\ell_2,\entry(\mathcal{I}\llbracket s_2 \rrbracket)),
  (\exit(\mathcal{I}\llbracket s_2 \rrbracket),\ell_1)\}\\
  &&~~\cup \flow(\mathcal{I}\llbracket s_1 \rrbracket) 
  \cup \flow(\mathcal{I}\llbracket s_2 \rrbracket) \\
  
  & = & \{(\ell_2,\exit(s_1)),
  (\entry(s_1),\ell_1),(\ell_2,\exit(s_2)),
  (\entry(s_2),\ell_1)\}\\
  &&~~\cup \flow(\mathcal{I}\llbracket s_1 \rrbracket) 
  \cup \flow(\mathcal{I}\llbracket s_2 \rrbracket) \\

  & = & \{(\ell_2,\exit(s_1)),
  (\entry(s_1),\ell_1),(\ell_2,\exit(s_2)),
  (\entry(s_2),\ell_1)\}\\
  &&~~\cup \flow^{-1}(s_1) 
  \cup \flow^{-1}(s_2) ~~~\mbox{(by ind.\ hypothesis)}\\
  
  & = & \flow^{-1}(s)
  \end{array}
  \]
  assuming that $\entry(\mathcal{I}\llbracket s \rrbracket) = 
  \exit(s)$ and that $\exit(\mathcal{I}\llbracket s \rrbracket) =
  \entry(s)$, which is straightforward.
  
  \item Let $s = (\ms{from}~ [e_1]^{\ell_1} ~\ms{do}~ s_1 ~\ms{loop}~s_2~\ms{until} ~[e_2]^{\ell_2})$. Then, we have
  \[
  \begin{array}{llll}
  \flow(\mathcal{I}\llbracket s \rrbracket) & = & 
  \flow(\ms{from}~[e_2]^{\ell_2}~\ms{do}~\mathcal{I}\llbracket s_1 \rrbracket~\ms{loop}~\mathcal{I}\llbracket s_2 \rrbracket~\ms{until}~[e_1]^{\ell_1}) \\
  
  & = & \{(\ell_2,\entry(\mathcal{I}\llbracket s_1 \rrbracket)), 
  (\exit(\mathcal{I}\llbracket s_1 \rrbracket),\ell_1),\\
  &&~~(\ell_1,\entry(\mathcal{I}\llbracket s_2 \rrbracket)),
  (\exit(\mathcal{I}\llbracket s_2 \rrbracket),\ell_2)\}\\
  && ~~\cup\flow(\mathcal{I}\llbracket s_1 \rrbracket)
  \cup \flow(\mathcal{I}\llbracket s_2 \rrbracket) \\
  
  & = & \{(\ell_2,\exit(s_1)), 
  (\entry(s_1),\ell_1), (\ell_1,\exit(s_2)),
  (\entry(s_2),\ell_2)\}\\
  && ~~\cup\flow(\mathcal{I}\llbracket s_1 \rrbracket)
  \cup \flow(\mathcal{I}\llbracket s_2 \rrbracket) \\
  
  & = & \{(\ell_2,\exit(s_1)), 
  (\entry(s_1),\ell_1), (\ell_1,\exit(s_2)),
  (\entry(s_2),\ell_2)\}\\
  && ~~\cup\flow^{-1}(s_1)
  \cup \flow^{-1}(s_2) ~~~\mbox{(by ind.\ hypothesis)}\\

  & = & \flow^{-1}(s) 
  \end{array}
  \]
  
  \item Finally, let $s = (s_1~s_2)$. Then, we have
  \[
  \begin{array}{llll}  
  \flow(\mathcal{I}\llbracket s_1~s_2 \rrbracket)
  & = & \flow(\mathcal{I}\llbracket s_2 \rrbracket~
  \mathcal{I}\llbracket s_1 \rrbracket) \\

  & = & \{(\exit(\mathcal{I}\llbracket s_2 \rrbracket),\entry(\mathcal{I}\llbracket s_1 \rrbracket))\} \cup\flow(\mathcal{I}\llbracket s_2 \rrbracket)\cup\flow(\mathcal{I}\llbracket s_1 \rrbracket)\\
  
  & = & \{(\entry(s_2),\exit(s_1))\} \cup\flow(\mathcal{I}\llbracket s_2 \rrbracket)\cup\flow(\mathcal{I}\llbracket s_1 \rrbracket)\\
  
  & = & \{(\entry(s_2),\exit(s_1))\} \cup\flow^{-1}(s_2)\cup\flow^{-1}(s_1) ~~~\mbox{(by IH)}\\
  
  & = & \flow^{-1}(s_1~s_2) \\
  \end{array} 
  \]
  \qed
  \end{itemize}
\end{proof}
\end{journal}

\begin{figure}[tp]
  \begin{mathpar}
        \inferrule*[leftstyle=\rm,left=$\overleftarrow{\mathit{AssVar}}$] 
{
\block(\ell') = (x~\oplus=e) \\  
\tuple{\sigma}{e} \Downarrow v \\
\ell'' \iedge \ell' 
} 
{ 
\config{\sigma,\ell',\ell'',\s} \lh \config{\sigma[x \mapsto \mathcal{I}_{op}\llbracket \oplus \rrbracket(\sigma(x),v)],\ell,\ell',\s}
}     
\and
\inferrule*[leftstyle=\rm,left=$\overleftarrow{\mathit{AssArr}}$] 
{ 
\block(\ell') = (x[e_l]~\oplus=e) \\
\tuple{\sigma}{ e_l} \Downarrow	 v_l \\
\tuple{\sigma}{ e} \Downarrow	 v\\
\ell'' \iedge \ell'
} 
{
\config{\sigma,\ell',\ell'',\s} \lh 
\config{\sigma[x[v_l] \mapsto \mathcal{I}_{op}\llbracket \oplus \rrbracket(\sigma(x[v_l]),v)],\ell,\ell',\s}
}     
\and
\inferrule*[leftstyle=\rm,left=$\overleftarrow{\mathit{Call}}$] 
{\block(\ell_s) = \ms{start}\\
\ell' \iedge \ell             
} 
{
\config{\sigma,\ell_s,\ell'',\ms{call}(\ell')\cons\s}
\lh
\config{\sigma,\ell,\ell',\s}
}
\and
\inferrule*[leftstyle=\rm,left=$\overleftarrow{\mathit{Return1}}$] 
{
  \block(\ell'') = (\ms{call}~id)\\
  \exit(\Gamma(\mathit{id})) \iedge \ell
}
{ 
\config{\sigma,\ell'',\ell''',\s} \lh
\config{\sigma,\ell,\exit(\Gamma(\mathit{id})),\ms{call}(\ell'')\cons\s} 
}
\and
\inferrule*[leftstyle=\rm,left=$\overleftarrow{\mathit{UnCall}}$] 
{\block(\ell_s) = \ms{start}\\
\ell' \iedge \ell             
} 
{
\config{\sigma,\ell_s,\ell'',\ms{uncall}(\ell')\cons\s}
\lh
\config{\sigma,\ell,\ell',\s}
}
\and
\inferrule*[leftstyle=\rm,left=$\overleftarrow{\mathit{Return2}}$] 
{
  \block(\ell'') = (\ms{uncall}~id)\\
  \exit(\Gamma(\mathit{id}^{-1})) \iedge \ell
}
{ 
\config{\sigma,\ell'',\ell''',\s} \lh
\config{\sigma,\ell,\exit(\Gamma(\mathit{id}^{-1})),\ms{uncall}(\ell'')\cons\s} 
}
\and
\inferrule*[leftstyle=\rm,left=$\overleftarrow{\mathit{Skip}}$] 
{
\block(\ell') = \ms{skip}\\
\ell'' \iedge \ell' \\
}
{ 
\config{\sigma,\ell',\ell'',\s}  \lh \config{\sigma,\ell,\ell',\s}  
}
\end{mathpar}
\caption{Reversible backward semantics: rules for basic constructs} \label{fig:reversible-smallstep-backward}
\end{figure}

Let us now present the transition relation, $\lh$,
defining the 
backward direction of the reversible semantics.
The backward rules for basic constructs are shown
in Figure~\ref{fig:reversible-smallstep-backward}.
In principle, each rule $\overleftarrow{\mathit{Rule}}$ 
represents the inverse version of the corresponding 
forward rule $\overrightarrow{{Rule}}$ from 
Fig.~\ref{fig:reversible-smallstep}, 
executing the inverse of the last executed statement 
(pointed to by the first label in the control of the configuration) 
and advancing in the inverse CFG (equivalently, moving backwards 
in the standard CFG). 
In particular, note that the rules 
$\overleftarrow{\mathit{Call}}$ and 
$\overleftarrow{\mathit{Uncall}}$ are triggered upon 
reaching the statement $\ms{start}$, whereas in the 
forward rules $\overrightarrow{{Return1}}$ and 
$\overrightarrow{{Return2}}$, the trigger was reaching 
the $\ms{stop}$ statement.

\begin{figure}[tp]
\begin{mathpar}
\inferrule*[leftstyle=\rm,left=$\overleftarrow{\mathit{IfTrue1}}$] 
{ 
\block(\ell_1) = e_1\\
\\
\tuple{\sigma}{e_1} \Downarrow	 v_1 \\ 
\ms{is\_true?}(v_1) \\
\ell_1 \iedge \ell \\
} 
{ 
\config{\sigma,\ell_1,\ell'_1,\ms{if\_true}(\ell_1,\ell_2)\cons\s} 
    \lh
\config{\sigma,\ell,\ell_1,\s}
}
\and 
\inferrule*[leftstyle=\rm,left=$\overleftarrow{\mathit{IfTrue2}}$] 
{ 
\block(\ell_2) = e_2\\
\context(\ell_2) = (\ms{if}~ [e_1]^{\ell_1} ~\ms{then}~ s_1 
~\ms{else}~s_2~\ms{fi} ~[e_2]^{\ell_2}) \\
\tuple{\sigma}{e_2} \Downarrow	 v_2 \\ 
\ms{is\_true?}(v_2) \\
}
{
\config{\sigma,\ell_2,\ell'_2,\s}
\lh
\config{\sigma,\exit(s_1),\ell_2,\ms{if\_true}(\ell_1,\ell_2)\cons\s} 
}
\and 
\inferrule*[leftstyle=\rm,left=$\overleftarrow{\mathit{IfFalse1}}$] 
{
\block(\ell_1) = e_1\\
\\
\tuple{\sigma}{e_1} \Downarrow	 v_1 \\ 
\ms{is\_false?}(v_1) \\
\ell_1 \iedge \ell\\
} 
{ 
\config{\sigma,\ell_1,\ell'_1,\ms{if\_false}(\ell_1,\ell_2)\cons\s} 
    \lh
\config{\sigma,\ell,\ell_1,\s}
}
\and 
\inferrule*[leftstyle=\rm,left=$\overleftarrow{\mathit{IfFalse2}}$] 
{ 
\block(\ell_2) = e_2\\
\context(\ell_2) = (\ms{if}~ [e_1]^{\ell_1} ~\ms{then}~ s_1 
~\ms{else}~s_2~\ms{fi} ~[e_2]^{\ell_2}) \\
\tuple{\sigma}{e_2} \Downarrow	 v_2 \\ 
\ms{is\_false?}(v_2) \\
}
{
\config{\sigma,\ell_2,\ell'_2,\s}
\lh
\config{\sigma,\exit(s_2),\ell_2,\ms{if\_false}(\ell_1,\ell_2)\cons\s} 
}
\and
\inferrule*[leftstyle=\rm,left=$\overleftarrow{\mathit{LoopMain}}$] 
{ 
\block(\ell_1) = e_1\\
\tuple{\sigma}{e_1} \Downarrow	 v_1\\ 
\ms{is\_true?}(v_1) \\
\ell_1 \iedge \ell \\
\ell \neq \exit(s_2)
} 
{
\config{\sigma,\ell_1,\ell',\ms{loop1}(\ell_1,s_1,\ell_2,s_2)\cons\s}
\lh
\config{\sigma,\ell,\ell_1,\s}
}
\and
\inferrule*[leftstyle=\rm,left=$\overleftarrow{\mathit{LoopBase}}$] 
{ 
\block(\ell_2) = e_2 \\
\context(\ell_2) =
(\ms{from}~ [e_1]^{\ell_1} ~\ms{do}~ s_1 ~\ms{loop}~s_2~\ms{until} ~[e_2]^{\ell_2})\\
\tuple{\sigma}{e_2} \Downarrow	 v_2\\ 
\ms{is\_true?}(v_2) \\
} 
{
\config{\sigma,\ell_2,\ell'_2,\s}
\lh
\config{\sigma,\exit(s_1),\ell_2,\ms{loop1}(\ell_1,s_1,\ell_2,s_2)\cons\s}
}
\and
\inferrule*[leftstyle=\rm,left=$\overleftarrow{\mathit{Loop1}}$] 
{ 
\block(\ell_2) = e_2 \\
\tuple{\sigma}{e_2} \Downarrow	 v_2\\ 
\ms{is\_false?}(v_2) 
} 
{
\config{\sigma,\ell_2,\ell',\ms{loop2}(\ell_1,s_1,\ell_2,s_2)\cons\s}
\lh
\config{\sigma,\exit(s_1),\ell_2,\ms{loop1}(\ell_1,s_1,\ell_2,s_2)\cons\s}
}
\and
\inferrule*[leftstyle=\rm,left=$\overleftarrow{\mathit{Loop2}}$] 
{ 
\block(\ell_1) = e_1\\
\tuple{\sigma}{e_1} \Downarrow	 v_1\\ 
\ms{is\_false?}(v_1) 
} 
{
\config{\sigma,\ell_1,\ell',\ms{loop1}(\ell_1,s_1,\ell_2,s_2)\cons\s} 
\lh
\config{\sigma,\exit(s_2),\ell_1,\ms{loop2}(\ell_1,s_1,\ell_2,s_2)\cons\s}  
}
\end{mathpar}
\caption{Reversible backward semantics: rules for if and for loop} \label{fig:reversible-ifloop-backward}
\end{figure}

The backward rules for conditional and loop statements are
in Figure~\ref{fig:reversible-ifloop-backward}. 
As before, each rule $\overleftarrow{\mathit{Rule}}$ represents 
the inverse version of the corresponding forward rule 
$\overrightarrow{{Rule}}$ from Fig.~\ref{fig:reversible-ifloop}. 
Let us emphasize only that conditionals and loops are now \emph{undone} 
from end to start; consequently, the rules 
$\overleftarrow{\mathit{IfTrue2}}$ and 
$\overleftarrow{\mathit{IfFalse2}}$ now inspect the context 
of the label to determine whether it corresponds to a 
conditional assertion or a loop test, and push a new
element onto the stack (while this was done
in the forward semantics in rules 
$\overrightarrow{\mathit{IfTrue1}}$ and
$\overrightarrow{\mathit{IfFalse1}}$).

Similarly, for loops, it is rule $\overleftarrow{\mathit{LoopBase}}$ 
that now inspects the context (and push a new element 
onto the stack). 
Conversely, the loop now terminates with the 
rule $\overleftarrow{\mathit{LoopMain}}$. 
Here, we verify that $\ell \neq \exit(s_2)$, since there 
are two outgoing edges in the inverse CFG: one that 
backtracks to the statement preceding the loop 
(the one of interest), and another one that reverses 
a further iteration by jumping to the exit of $s_2$. 
This condition ensures that the correct path is followed.

\begin{conference}
\begin{example} \label{ex:backward_trace}
  Consider the derivation shown in Example~\ref{ex:forward_trace}.
  Starting from the last configuration, we have the 
  backward derivation shown in Figure~\ref{fig:backward_derivation}.
  As in the previous example, we highlight 
  in red the ``active'' label in each step (i.e., the label 
  that represents the last executed statement). 
  \begin{figure}[tbp]
  $
  \begin{array}{rl}
  & \config{\{n\mapsto 4,i\mapsto 2,total\mapsto 2\},\red{3},4,\nul}\\  
  
    {}_{\overleftarrow{\mathit{Return1}}} & 
  \config{\{n\mapsto 4,i\mapsto 2,total\mapsto 2\},\red{14},15,\ms{call}(3)\cons\nul}\\  

    {}_{\overleftarrow{\mathit{AssVar}}} & 
  \config{\{n\mapsto 2,i\mapsto 2,total\mapsto 2\},\red{13},14,\ms{call}(3)\cons\nul}\\  
  
  {}_{\overleftarrow{\mathit{LoopBase}}} & 
  \config{\{n\mapsto 2,i\mapsto 2,total\mapsto 2\},\red{11},13,\ms{loop1}(7,s_1,13,s_2)\cons\ms{call}(3)\cons\nul}\\  
  
    {}_{\overleftarrow{\mathit{IfTrue2}}} & 
  \config{\{n\mapsto 2,i\mapsto 2,total\mapsto 2\},\red{9},11,\ms{if\_true}(8,11)\cons\ms{loop1}(7,s_1,13,s_2)\cons\ms{call}(3)\cons\nul}\\  
  
  {}_{\overleftarrow{\mathit{AssVar}}} & 
  \config{\{n\mapsto 2,i\mapsto 2\},\red{8},9,\ms{if\_true}(8,11)\cons\ms{loop1}(7,s_1,13,s_2)\cons\ms{call}(3)\cons\nul}\\  
  
  {}_{\overleftarrow{\mathit{IfTrue1}}} &
    \config{\{n\mapsto 2,i\mapsto 2\},\red{7},8,\ms{loop1}(7,s_1,13,s_2)\cons\ms{call}(3)\cons\nul}\\  

%
%
%
%
%

  {}_{\overleftarrow{\mathit{Loop2}}} & 
    \config{\{n\mapsto 2,i\mapsto 2\},\red{12},7,\ms{loop2}(7,s_1,13,s_2)\cons\ms{call}(3)\cons\nul}\\  

  {}_{\overleftarrow{\mathit{AssVar}}} & 
    \config{\{n\mapsto 2,i\mapsto 1\},\red{13},12,\ms{loop2}(7,s_1,13,s_2)\cons\ms{call}(3)\cons\nul}\\  

  {}_{\overleftarrow{\mathit{Loop1}}} & 
    \config{\{n\mapsto 2,i\mapsto 1\},\red{11},13,\ms{loop1}(7,s_1,13,s_2)\cons\ms{call}(3)\cons\nul}\\  

  {}_{\overleftarrow{\mathit{IfFalse2}}} & 
    \config{\{n\mapsto 2,i\mapsto 1\},\red{10},11,\ms{if\_false}(8,11)\cons\ms{loop1}(7,s_1,13,s_2)\cons\ms{call}(3)\cons\nul}\\  

  {}_{\overleftarrow{\mathit{Skip}}} & 
    \config{\{n\mapsto 2,i\mapsto 1\},\red{8},10,\ms{if\_false}(8,11)\cons\ms{loop1}(7,s_1,13,s_2)\cons\ms{call}(3)\cons\nul}\\  

  {}_{\overleftarrow{\mathit{IfFalse1}}} & 
    \config{\{n\mapsto 2,i\mapsto 1\},\red{7},8,\ms{loop1}(7,s_1,13,s_2)\cons\ms{call}(3)\cons\nul}\\  

  {}_{\overleftarrow{\mathit{LoopMain}}} & 
    \config{\{n\mapsto 2,i\mapsto 1\},\red{6},7,\ms{call}(3)\cons\nul}\\
    
  {}_{\overleftarrow{\mathit{AssVar}}} & 
    \config{\{n\mapsto 2\},\red{5},6,\ms{call}(3)\cons\nul}\\

  {}_{\overleftarrow{\mathit{Call}}} & 
    \config{\{n\mapsto 2\},\red{2},3,\nul}\\
  
  {}_{\overleftarrow{\mathit{AssVar}}} & 
    \config{\emptystate,\red{1},2,\nul}\\
  \end{array}
  $
  \caption{Backward derivation with the reversible semantics}
  \label{fig:backward_derivation}
  \end{figure}
\end{example}
\end{conference}

\begin{journal}
\begin{example} \label{ex:backward_trace}
  Consider the derivation shown in Example~\ref{ex:forward_trace}.
  Starting from the last configuration, we have the 
  backward derivation shown in Figure~\ref{fig:backward_derivation}.
  As in the previous example, we highlight 
  in red the ``active'' label in each step (i.e., the label 
  that represents the last executed statement). 
  \begin{figure}[tbp]
  $
  \begin{array}{rl}
  & \config{\{n\mapsto 6,i\mapsto 3,total\mapsto 3\},\red{3},4,\nul}\\  
  
    {}_{\overleftarrow{\mathit{Return1}}} & 
  \config{\{n\mapsto 6,i\mapsto 3,total\mapsto 3\},\red{14},15,\ms{call}(3)\cons\nul}\\  

    {}_{\overleftarrow{\mathit{AssVar}}} & 
  \config{\{n\mapsto 3,i\mapsto 3,total\mapsto 3\},\red{13},14,\ms{call}(3)\cons\nul}\\  
  
  {}_{\overleftarrow{\mathit{LoopBase}}} & 
  \config{\{n\mapsto 3,i\mapsto 3,total\mapsto 3\},\red{11},13,\ms{loop1}(7,s_1,13,s_2)\cons\ms{call}(3)\cons\nul}\\  
  
    {}_{\overleftarrow{\mathit{IfTrue2}}} & 
  \config{\{n\mapsto 3,i\mapsto 3,total\mapsto 3\},\red{9},11,\ms{if\_true}(8,11)\cons\ms{loop1}(7,s_1,13,s_2)\cons\ms{call}(3)\cons\nul}\\  
  
  {}_{\overleftarrow{\mathit{AssVar}}} & 
  \config{\{n\mapsto 3,i\mapsto 3\},\red{8},9,\ms{if\_true}(8,11)\cons\ms{loop1}(7,s_1,13,s_2)\cons\ms{call}(3)\cons\nul}\\  
  
  {}_{\overleftarrow{\mathit{IfTrue1}}} &
    \config{\{n\mapsto 3,i\mapsto 3\},\red{7},8,\ms{loop1}(7,s_1,13,s_2)\cons\ms{call}(3)\cons\nul}\\  

  {}_{\overleftarrow{\mathit{Loop2}}} & 
    \config{\{n\mapsto 3,i\mapsto 3\},\red{12},7,\ms{loop2}(7,s_1,13,s_2)\cons\ms{call}(3)\cons\nul}\\  

  {}_{\overleftarrow{\mathit{AssVar}}} & 
    \config{\{n\mapsto 3,i\mapsto 2\},\red{13},12,\ms{loop2}(7,s_1,13,s_2)\cons\ms{call}(3)\cons\nul}\\  

   {}_{\overleftarrow{\mathit{Loop1}}} & 
    \config{\{n\mapsto 3,i\mapsto 2\},\red{11},13,\ms{loop1}(7,s_1,13,s_2)\cons\ms{call}(3)\cons\nul}\\  

   {}_{\overleftarrow{\mathit{IfFalse2}}} & 
    \config{\{n\mapsto 3,i\mapsto 2\},\red{10},11,\ms{if\_false}(8,11)\cons\ms{loop1}(7,s_1,13,s_2)\cons\ms{call}(3)\cons\nul}\\  

   {}_{\overleftarrow{\mathit{Skip}}} & 
    \config{\{n\mapsto 3,i\mapsto 2\},\red{8},10,\ms{if\_false}(8,11)\cons\ms{loop1}(7,s_1,13,s_2)\cons\ms{call}(3)\cons\nul}\\  

   {}_{\overleftarrow{\mathit{IfFalse1}}} & 
    \config{\{n\mapsto 3,i\mapsto 2\},\red{7},8,\ms{loop1}(7,s_1,13,s_2)\cons\ms{call}(3)\cons\nul}\\  

  {}_{\overleftarrow{\mathit{Loop2}}} & 
    \config{\{n\mapsto 3,i\mapsto 2\},\red{12},7,\ms{loop2}(7,s_1,13,s_2)\cons\ms{call}(3)\cons\nul}\\  

  {}_{\overleftarrow{\mathit{AssVar}}} & 
    \config{\{n\mapsto 3,i\mapsto 1\},\red{13},12,\ms{loop2}(7,s_1,13,s_2)\cons\ms{call}(3)\cons\nul}\\  

  {}_{\overleftarrow{\mathit{Loop1}}} & 
    \config{\{n\mapsto 3,i\mapsto 1\},\red{11},13,\ms{loop1}(7,s_1,13,s_2)\cons\ms{call}(3)\cons\nul}\\  

  {}_{\overleftarrow{\mathit{IfFalse2}}} & 
    \config{\{n\mapsto 3,i\mapsto 1\},\red{10},11,\ms{if\_false}(8,11)\cons\ms{loop1}(7,s_1,13,s_2)\cons\ms{call}(3)\cons\nul}\\  

  {}_{\overleftarrow{\mathit{Skip}}} & 
    \config{\{n\mapsto 3,i\mapsto 1\},\red{8},10,\ms{if\_false}(8,11)\cons\ms{loop1}(7,s_1,13,s_2)\cons\ms{call}(3)\cons\nul}\\  

  {}_{\overleftarrow{\mathit{IfFalse1}}} & 
    \config{\{n\mapsto 3,i\mapsto 1\},\red{7},8,\ms{loop1}(7,s_1,13,s_2)\cons\ms{call}(3)\cons\nul}\\  

  {}_{\overleftarrow{\mathit{LoopMain}}} & 
    \config{\{n\mapsto 3,i\mapsto 1\},\red{6},7,\ms{call}(3)\cons\nul}\\
    
  {}_{\overleftarrow{\mathit{AssVar}}} & 
    \config{\{n\mapsto 3\},\red{5},6,\ms{call}(3)\cons\nul}\\

  {}_{\overleftarrow{\mathit{Call}}} & 
    \config{\{n\mapsto 3\},\red{2},3,\nul}\\
  
  {}_{\overleftarrow{\mathit{AssVar}}} & 
    \config{\emptystate,\red{1},2,\nul}\\
  \end{array}
  $
  \caption{Backward derivation with the reversible semantics}
  \label{fig:backward_derivation}
  \end{figure}
\end{example}
\end{journal}

In order to prove the loop lemma below, stating that each forward step can be undone by a backward step and vice versa, we need to restrict our
attention to reachable configurations.

\begin{definition}[reachable configuration]
  A configuration $\config{\sigma,\ell'_1,\ell'_2,\s}$ is \emph{reachable} if it can be obtained by applying the rules of the semantics starting from an initial configuration, namely $\config{\emptystate,\ell_1,\ell_2,\nul} (\rh \cup \lh)^* \config{\sigma,\ell'_1,\ell'_2,\s}$ for some initial configuration $\config{\emptystate,\ell_1,\ell_2,\nul}$.
\end{definition}
This rules out, e.g., the case of configurations
$\config{\sigma,\ell,\ell',\s}$ where $\ell$ and $\ell'$ are 
not the two ends of an edge of the CFG.


\begin{lemma}[loop lemma]
  Let $\config{\sigma,\ell_1,\ell_2,\s}$ be a reachable configuration.
  Then, 
  $\config{\sigma,\ell_1,\ell_2,\s} \rh \config{\sigma',\ell'_1,\ell'_2,\s'}$
  iff
  $\config{\sigma',\ell'_1,\ell'_2,\s'} \lh \config{\sigma,\ell_1,\ell_2,\s}$.
\end{lemma}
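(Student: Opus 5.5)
The plan is to prove the two directions separately, each by case analysis on the rule applied, pairing every forward rule $\overrightarrow{\mathit{R}}$ with its backward counterpart $\overleftarrow{\mathit{R}}$. Before the case analysis, I would establish an invariant of reachable configurations $\config{\sigma,\ell_1,\ell_2,\s}$, proved by induction on the length of the $(\rh\cup\lh)^*$ derivation from an initial configuration. The invariant has three clauses:
\begin{itemize}
\item[(i)] either $\ell_1 \edge \ell_2$ is an edge of the CFG of the procedure (or main program) currently executing, or $\block(\ell_1)=\ms{start}$ and $\ell_1 \edge \ell_2$, or $\ell_1$ is a $\ms{call}/\ms{uncall}$ label and $\ell_2$ is its CFG successor;
\item[(ii)] the stack is well formed: each $\ms{if\_true}(\ell_1,\ell_2)$, $\ms{loop1}(\ldots)$, etc.\ refers to the innermost enclosing construct of the current labels, and each $\ms{call}(\ell)$ records a call whose body is the currently executing procedure;
\item[(iii)] if the top of the stack is $\ms{if\_true}(\ell_1,\ell_2)$ (resp.\ $\ms{if\_false}$), then the current position lies inside $s_1$ (resp.\ $s_2$), and analogously for $\ms{loop1}$ (inside $s_1$, or at $\ell_2$) and $\ms{loop2}$ (inside $s_2$, or at $\ell_1$).
\end{itemize}
This invariant is exactly what is needed to pin down the unique predecessor label $\ell$ that a backward rule must restore.

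For the ($\Rightarrow$) direction, I take each forward rule and check that the corresponding backward rule applies to the target and returns the source. For assignments, $\overleftarrow{\mathit{AssVar}}$ evaluates $e$ in the updated store. The value is unchanged because Janus forbids $x$ from occurring in $e$ (similarly for arrays, where $x$ is absent from $e_l$ and $e$). Applying $\mathcal{I}_{op}\llbracket\oplus\rrbracket$ then restores $\sigma(x)$, since $+/-$ are mutually inverse and $\hat{~}$ is self-inverse. The predecessor $\ell$ with $\ell' \iedge \ell$ must coincide with the original first label; by (i), $\ell\edge\ell'$, and I would argue that every elementary block that is not a test/assertion or $\ms{start}$ has a unique CFG predecessor. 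For conditionals and loops, the stored pair of labels, together with $\context$, gives $\exit(s_1)$ or $\exit(s_2)$ directly. Guards are evaluated in an unchanged store, so the truth values match. The side conditions $\ell\neq\exit(s_2)$ in $\overleftarrow{\mathit{LoopMain}}$ and $\ell'_2\neq\entry(s_1)$ in $\overrightarrow{\mathit{LoopBase}}$ select the correct one of two edges. For calls, $\overleftarrow{\mathit{Call}}$ fires on the $\ms{start}$ label and pops $\ms{call}(\ell')$. $\overleftarrow{\mathit{Return1}}$ re-pushes the call and jumps to $\exit(\Gamma(id))$, which is the $\ms{stop}$ label; its predecessor is the last executed statement of the body.

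The ($\Leftarrow$) direction is symmetric: a case analysis on the backward rule, showing that the matching forward rule applies. In this direction the choice of the next label $\ell_2$ must be recovered, and invariant (i) on the source configuration ensures the forward rule picks the same successor. One subtlety is that forward rules carry the old first label $\ell$ as an arbitrary, unconstrained component. Backward rules must therefore determine it, and this again relies on (i) and (iii).

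The main obstacle is uniqueness of predecessors: a label can have several incoming CFG edges. Examples are the assertion $\ell_2$ of an if (reached from $\exit(s_1)$ and $\exit(s_2)$), the loop entry $\ell_1$ (reached from the preceding statement and from $\exit(s_2)$), and the label following a call (whose predecessor in the configuration is the call label, not a CFG edge in the body). My first attempt would be to show, by structural induction on statements using the CFG definition, that the only labels with in-degree greater than one are if-assertions and loop entry tests. These are exactly the labels handled by rules that consult the stack ($\ms{if\_true}/\ms{if\_false}$ distinguishing the branch, $\ms{loop1}/\ms{loop2}$ distinguishing the entry edge) or by $\context$. All other backward steps are then deterministic and agree with the forward step.
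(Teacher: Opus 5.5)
Your strategy matches the paper's proof: pair each $\overrightarrow{R}$ with $\overleftarrow{R}$ and check case by case. Your reachability invariant and in-degree analysis make explicit what the paper only uses tacitly (its footnote on nodes with two incoming or outgoing edges).

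\textbf{Gap in ($\Rightarrow$), case $\mathit{LoopMain}$.} The step you single out as the main obstacle does not go through. At the loop entry label $\ell_1$ you rely on $\ms{loop1}/\ms{loop2}$ to distinguish the two incoming edges. That works for the backward rules, but $\overrightarrow{\mathit{LoopMain}}$ inspects neither the stack nor the first label. Take $[\ms{start}]^1~\ms{from}~[i=0]^2~\ms{do}~[\ms{skip}]^3~\ms{loop}~[\ms{skip}]^4~\ms{until}~[i=1]^5~[\ms{stop}]^6$, with $s_1=[\ms{skip}]^3$ and $s_2=[\ms{skip}]^4$.
\begin{itemize}
\item Ordinary forward execution from $\config{\emptystate,1,2,\nul}$ reaches $k=\config{\emptystate,4,2,\ms{loop2}(2,s_1,5,s_2)\cons\nul}$, where $i=0$ still holds.
\item There $\overrightarrow{\mathit{Loop2}}$ is blocked, but $\overrightarrow{\mathit{LoopMain}}$ fires and yields $k'=\config{\emptystate,2,3,\ms{loop1}(2,s_1,5,s_2)\cons\ms{loop2}(2,s_1,5,s_2)\cons\nul}$.
\item No backward step leads from $k'$ to $k$. $\overleftarrow{\mathit{Loop2}}$ needs $e_1$ false, and the side condition $\ell\neq\exit(s_2)$ of $\overleftarrow{\mathit{LoopMain}}$ forces the restored first label to be $1$, not $4$.
\end{itemize}
So ($\Rightarrow$) fails at a reachable configuration, and your invariant (ii) is not preserved (two frames for the same loop). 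Since $k$ arises from plain forward execution, no invariant of reachable configurations can exclude it. You must either strengthen $\overrightarrow{\mathit{LoopMain}}$, e.g.\ require that the top of the stack is not $\ms{loop2}(\ell_1,s_1,\ell_2,s_2)$ (as in the stack semantics, which gets stuck on a violated entry assertion), or restrict the lemma to runs without such violations. The paper's case analysis makes the same tacit assumption.

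\textbf{Gap in ($\Leftarrow$).} You apply invariant (i) to ``the source configuration'', but only the target $k$ of the backward step is assumed reachable. The backward rules for conditionals, loops and calls never constrain the second label of their source: $\ell'_1$ in $\overleftarrow{\mathit{IfTrue1}}$, $\ell''$ in $\overleftarrow{\mathit{Call}}$, $\ell'''$ in $\overleftarrow{\mathit{Return1}}$, $\ell'_2$ in $\overleftarrow{\mathit{IfTrue2}}$, and so on. For instance, replacing $\entry(s_1)$ by $\entry(s_2)$ in the image of an $\overrightarrow{\mathit{IfTrue1}}$ step gives $k'\lh k$ without $k\rh k'$. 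You need $k'$ to be reachable as well, as in the usual formulations of the Loop Lemma; then your invariant applied to $k'$ does the job.

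\textbf{Two printed rules do not do what you claim.}
\begin{itemize}
\item In $\overrightarrow{\mathit{LoopBase}}$ the condition $\ell'_2\neq\entry(s_1)$ is vacuous, because the successors of $\ell_2$ are $\entry(s_2)$ and the continuation. It must read $\ell'_2\neq\entry(s_2)$; otherwise the rule can pop $\ms{loop1}$ and jump into $s_2$, breaking (ii) and (iii).
\item The premise $\ell''\iedge\ell'$ of $\overleftarrow{\mathit{AssVar}}$, $\overleftarrow{\mathit{AssArr}}$ and $\overleftarrow{\mathit{Skip}}$ leaves the restored $\ell$ free, which breaks (i). You silently read it as $\ell'\iedge\ell$.
\end{itemize}
State both corrections explicitly, since your invariant depends on them.
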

\begin{journal}
\begin{proof}
  We prove the claim by a case distinction on the applied rule:
  \begin{description}
  \item[\rm ($\mathit{AssVar}$ and $\mathit{AssArr}$)] 
  In both cases, the forward rule performs the inverse function 
  of the backward rule, and vice versa. 
  In the case of $\mathit{AssVar}$, we observe that
  if $\sigma(x) = v_0$ and it becomes 
  $v_1 = \llbracket \oplus \rrbracket(\sigma(x),v)$ after the 
  forward step, where $\tuple{\sigma}{e}\Downarrow v$,
  yielding the store $\sigma' = \sigma[x\mapsto v_1]$, 
  it is 
  straightforward to see that updating variable 
  $x$ in $\sigma'$ with 
  $\mathcal{I}_{op}\llbracket \oplus \rrbracket(\sigma'(x),v)$,
  as the backward rule does, will restore the original 
  value $v_0$ of $x$ and store $\sigma$ 
  since $\tuple{\sigma'}{e}\Downarrow v$
  too (because Janus requires $x$ not to occur in $e$).
  Moreover, $\ell' \edge \ell''$ iff $\ell'' \iedge \ell'$
  since this case involves an assignment, so there exists 
  only a single outgoing edge in the CFG.\footnote{Note that
  the only statements whose associated nodes in the CFG 
  may have two incoming or two outgoing edges are the 
  conditions (tests or assertions) of a conditional 
  or a loop. All other nodes always have a single incoming
  and outgoing edge.} 
  The case of rule $\mathit{AssArr}$ is analogous.
  
  \item[\rm ($\mathit{Call}$ and $\mathit{Return1}$)]
  Consider the configuration 
  $\config{\sigma,\ell,\ell',\s}$. Then, to apply a 
  forward step with rule $\overrightarrow{\mathit{Call}}$  
  we must have $\block(\ell') = (\ms{call}~\mathit{id})$,
  with $\ell \edge \ell'$ in the CFG.
  Hence, after the forward step, the control becomes
  $\ell_s,\ell''$, with 
  $\block(\ell_s) = \ms{start}$ 
  (i.e., $\entry(\Gamma(\mathit{id}))$) 
  and $\ell_s \edge \ell''$ in the CFG. 
  Moreover, the rule also pushes an element of the form
  $\ms{call}(\ell')$ onto the stack.
  The backward step with rule
  $\overleftarrow{\mathit{Call}}$   
  performs exactly the inverse actions:
  checks that $\block(\ell_s) = \ms{start}$ and, then, 
  recovers the control where
  $\ell'$ points to the original $\ms{call}$ 
  and $\ell$ to its previous statement, also removing
  the top element of the stack.
  
  The case for rules $\overrightarrow{\mathit{Return1}}$ and
  $\overleftarrow{\mathit{Return1}}$ is similar. Now,
  the forward rule triggers when the next statement is
  $\ms{stop}$. In this case, removes the element
  $\ms{call}$ from the stack and recovers as control of 
  the configuration the label of the original $\ms{call}$
  and that of the next statement. Conversely, the backward 
  rule triggers when when the last statement points to a $\ms{call}$,
  and then pushes a new $\ms{call}$ element onto the stack and
  recovers the labels of the last two statements of the
  procedure called.
  
  \item[\rm ($\mathit{UnCall}$ and $\mathit{Return2}$)] The case
  for these rules is perfectly analogous to the previous one.
  
  \item[\rm ($\mathit{Skip}$)] This case is trivial.
  
  \item[\rm ($\mathit{IfTrue1}$ and $\mathit{IfTrue2}$)]
  As for $\mathit{IfTrue1}$, 
  the forward rule is triggered when the label of the 
  next statement to be executed, $\ell_1$, points to the test 
  of a conditional of the form $(\ms{if}~ [e_1]^{\ell_1} 
  ~\ms{then}~ s_1 ~\ms{else}~s_2~\ms{fi} ~[e_2]^{\ell_2})$. 
  Then, if the condition $e_1$ evaluates to $true$, it updates 
  the control to $\ell_1,\ell'_1$, where 
  $\ell'_1 = \entry(s_1)$. Additionally, it pushes an element 
  of the form $\ms{if\_true}(\ell_1,\ell_2)$ onto the stack. 
  Conversely, the backward rule considers a configuration 
  in which the label of the last executed statement, $\ell_1$, 
  is the test of a conditional (the same one appearing in the 
  $\ms{if\_true}(\ell_1,\ell_2)$ element at the top of the 
  stack); if it evaluates to $true$, the rule backtracks to 
  the previous statement and pops the $\ms{if\_true}$ 
  element from the stack, thereby restoring the original 
  configuration.
  
  Regarding $\mathit{IfTrue2}$, the forward rule is triggered 
  when the label of the next statement to be executed, 
  $\ell_2$, points to the assertion of a conditional 
  $(\ms{if}~ [e_1]^{\ell_1} ~\ms{then}~ s_1~\ms{else}~s_2~\ms{fi} ~[e_2]^{\ell_2})$ 
  and an element of the form $\ms{if\_true}(\ell_1,\ell_2)$ is 
  at the top of the stack. In this case, it pops the element 
  from the stack and updates the control to advance to the 
  next statement. Conversely, the backward rule considers 
  a configuration in which the label of the last executed 
  statement, $\ell_2$, is the assertion of a conditional 
  (verified via the auxiliary function $\context$); 
  it then backtracks to the previous statement, $\exit(s_1)$, 
  and pushes the element $\ms{if\_true}(\ell_1,\ell_2)$ 
  back onto the stack, thereby restoring the original 
  configuration.

  \item[\rm ($\mathit{IfFalse1}$ and $\mathit{IfFalse2}$)] The case
  for these rules is perfectly analogous to the previous one.

  \item[\rm ($\mathit{LoopMain}$ and $\mathit{LoopBase}$)] 
  Regarding $\mathit{LoopMain}$, the forward rule is triggered 
  when the label of the next statement to be executed, $\ell_1$, 
  points to the assertion $e_1$ of a loop of the form 
  $(\ms{from}~ [e_1]^{\ell_1} ~\ms{do}~ s_1 ~\ms{loop}~s_2~\ms{until} ~[e_2]^{\ell_2})$. 
  In this case, the derived configuration updates the label of the 
  next statement to point to $\entry(s_1)$ and pushes the 
  element $\ms{loop1}(\ell_1,s_1,\ell_2,s_2)$ onto the stack. 
  Conversely, the backward rule proceeds inversely when the 
  element at the top of the stack has the form 
  $\ms{loop1}(\ell_1,s_1,\ell_2,s_2)$, the label of the 
  last executed statement is $\ell_1$ (pointing to the loop 
  assertion $e_1$), and, furthermore, it evaluates to 
  $\mathit{true}$. In this case, the backward rule pops 
  the $\ms{loop1}$ element from the stack and updates the 
  control to include the label $\ell$ of the statement preceding 
  the loop, i.e., $\ell_1 \iedge \ell$ with $\ell\neq\exit(s_2)$.
  
  Regarding $\mathit{LoopBase}$, the situation is similar to 
  the previous case but now focus on the loop test 
  $e_2$. While the forward rule pops the element $\ms{loop1}$
  from the stack and updates the control to the statement 
  following the loop when $e_2$ evaluates to $\mathit{true}$, 
  the backward rule pushes the $\ms{loop1}$ element back
  and updates the control such that the last executed 
  statement is $\exit(s_1)$, the statement executed 
  immediately before the evaluation of $e_1$.
   
  \item[\rm ($\mathit{Loop1}$ and $\mathit{Loop2}$)] These
  cases are straightforward.
  \qed
  \end{description}
\end{proof}
\end{journal}


\section{Related Work, Conclusion and Future Work}\label{sec:related}

We proposed a small-step semantics for Janus based on a notion of
program counter and established its equivalence to a classical
small-step semantics based on a notion of stack. They are also
equivalent to Janus semantics from the
literature~\cite{YOKOYAMA201071,LoopSem,JanusSem,LLS24}. The Janus
literature also covers extensions of Janus with other constructs, such
as local variables~\cite{YOKOYAMA201071,LoopSem} and
stacks~\cite{YOKOYAMA201071,LoopSem}. As immediate next step, we will
extend our approach to cover such constructs as well.
We also plan to apply our approach to define small-step semantics to
other pure reversible languages in the literature. These include
object-oriented languages such as ROOPL~\cite{ROOPL1,ROOPL} and
Joule~\cite{Joule}, and functional languages such as RFun~\cite{Rfun}
and CoreFun~\cite{CoreFun}, which are all equipped with big-step
semantics.
									
The only small-step reversible semantics we are aware of are
for
domain-specific languages for specifying assembly sequences in industrial robots~\cite{revAss,revAss1}. However,
their semantics
is quite different from the one of Janus or functional or object-oriented reversible languages, since it involves the position and actions of the robot in the real world. Hence the relation with our work is quite limited.

As mentioned above, the literature has few semantics of high-level
languages based on a program counter. A notable exception is Gurevich
and Huggins' semantics of C \cite{GurevichH92} based on an
\emph{Abstract State Machine} (see also \cite{DomerGM02}). A similar
pattern is found in some intermediate representations like the
\emph{Static Single Assignment} (SSA), which rely on explicit
control-flow graphs---effectively reintroducing a program counter---to
facilitate data flow analysis \cite{CytronFRWZ91}

A promising area for future exploration is the integration of concurrency support into Janus. By enabling the creation of processes and enabling them to exchange messages, such an extension could allow one to write concurrent reversible programs. The definition of the reversible small-step semantics provided in this paper is a relevant first step in this direction.

\bibliographystyle{splncs04} 

\end{document}

%% file: mymacros.tex
\newcommand{\edge}{\longrightarrow}
\newcommand{\iedge}{\stackrel{\mbox{\tiny -1}}{\longrightarrow}}

\usepackage{amsmath}
\usepackage{amssymb}
\usepackage{hyperref}
\usepackage{mathtools}
\usepackage[all]{xy}
\usepackage{subfiles} 
\usepackage{mathpartir}

\usepackage{scalerel}
\usepackage{mathrsfs}
\usepackage{stmaryrd}
\usepackage{thm-restate}
\usepackage{quoting}
\usepackage{tikz}
\usetikzlibrary{matrix}

\newcommand{\red}[1]{\textcolor{red}{#1}}

\newcommand{\ms}[1]{\mathsf{#1}}

\newcommand \tuple[2]{ #1 \vdash #2}
\newcommand{\mi}[1]{\mathit{#1}}

\newcommand{\rules}[1]{\textsc{#1}}